\newtheorem{theorem}{Theorem}[section]
\newtheorem{prop}[theorem]{Proposition}
\newtheorem{lemma}[theorem]{Lemma}
\newtheorem{corollary}[theorem]{Corollary}
\newtheorem{remark}[theorem]{Remark}
\newcommand{\R}{{\mathbb R}}
\newcommand{\N}{{\mathbb N}}
\newcommand{\E}{{\mathbb E}}
\newcommand{\cPh}{\hat{\mathcal P}}
\newcommand{\cI}{{\mathcal I}}
\renewcommand{\L}{{\mathbb L}}
\newcommand{\pol}{\textup{pol}}
\newcommand{\CP}[1]{{\mathcal C}^{#1}_{\pol}(\R\times \R_+)}
\newcommand{\CPL}[2]{{\mathcal C}^{#1,#2}_{\pol}(\R\times \R_+)}
\def\bbar{\overline}
\def\brho{\bbar \rho}
\def\hX{\hat X}
\def\hY{\hat Y}
\def\tK{\tilde K}
\def\tX{\tilde X}
\def\tY{\tilde Y}
\def\mcC{{\mathcal C}}
\def\mcL{{\mathcal L}}
\def\mcM{{\mathcal M}}
\def\mcN{{\mathcal N}}
\def\bfy{{\mathbf y}}
\def\bff{{\mathbf f}}
\def\tbfY{\tilde{\mathbf Y}}
\def\a{\alpha}
\def\b{\beta}
\def\<{\langle}
\def\>{\rangle}
\def \pol{{\mathbf{pol}}}
\def\red{}
\title[High order approximations for the Heston model]{High order approximations and simulation schemes for the log-Heston process}
\date{\today}
\author{Aurélien Alfonsi and Edoardo Lombardo}
\address{Aurélien Alfonsi, CERMICS, Ecole des Ponts, Marne-la-Vall\'ee, France. MathRisk, Inria, Paris, France.}
\email{aurelien.alfonsi@enpc.fr}
\address{Edoardo Lombardo, CERMICS, Ecole des Ponts, Marne-la-Vall\'ee, France. MathRisk, Inria, Paris, France. Universit\`a degli Studi di Roma Tor Vergata, Rome, Italy. }
\email{edoardolombardo92@gmail.com}
\thanks{This work benefited from the support of the ``chaire Risques financiers'', Fondation du Risque.  Edoardo Lombardo is partially supported by the MIUR Excellence Department Project Math@TOV awarded to the Department of Mathematics, University of Rome Tor Vergata. We are grateful to the anonymous referees for their feedback on the first version of this paper.}
\subjclass[2010]{60H35, 91G60, 65C30, 65C05}
\keywords{Weak approximation schemes, random grids, Heston model, rough Heston model}
\begin{document}

\begin{abstract}
  We present weak approximations schemes of any order for the Heston model that are obtained by using the method developed by Alfonsi and Bally (2021). This method consists in combining approximation schemes calculated on different random grids to increase the order of convergence. We apply this method with either the Ninomiya-Victoir scheme (2008) or a second-order scheme that samples exactly the volatility component, and we show rigorously that we can achieve then any order of convergence. We give numerical illustrations on financial examples that validate the theoretical order of convergence. \red{We also present promising numerical results for the multifactor/rough Heston model and hint at applications to other models, including the Bates model and the double Heston model. }
\end{abstract}
\maketitle

\section{Introduction}

The Heston model~\cite{Heston} is one of the most popular model in mathematical finance. It describes the dynamics of an asset and its instantaneous volatility by the following stochastic differential equations:
\begin{equation}\label{Heston_SDEs}
  \begin{cases}
    dS^{s,y}_t &= rS^{s,y}_t dt + \sqrt{Y^y_t}S^{s,y}_t (\rho dW_t + \sqrt{1-\rho^2} dB_t), \ S^{s,y}_0=s >0,\\
    dY^y_t &= (a-bY^y_t) dt +\sigma \sqrt{Y^y_t} dW_t, \ Y^y_0=y\ge 0,
  \end{cases}
\end{equation}
where $W$ and $B$ are two independent Brownian motions, $a\ge 0$, $b\in \R$, $\sigma>0$ and $\rho \in [-1,1]$. For the financial application, it is typically assumed in addition that $b>0$ so that the volatility mean reverts towards $a/b$, but this is not needed in the present paper. 

The goal of the paper is to propose high order weak approximation for this model and to prove their convergence. Let us recall first that exact simulation methods have been proposed for the Heston model by Broadie and Kaya~\cite{BrKa} and then by Glasserman and Kim~\cite{GlKi}. However, these methods usually require more computation time than simulation schemes. Besides, when considering variants or extensions of the Heston model, it is not clear how to simulate them exactly while approximation schemes can more simply be reused or adapted. There exists in the literature many approximation schemes of the Heston model, we mention here the works of  Andersen~\cite{Andersen}, Lord et al.~\cite{LKVD}, Ninomiya and Victoir~\cite{NV} and Alfonsi~\cite{AA_MCOM}. Few of them study from a theoretical point of view the weak convergence of these schemes. While~\cite{AA_MCOM} focuses on the volatility component, Altmayer and Neuenkirch~\cite{AlNe} proposes up to our knowledge the first analysis of the weak error for the whole Heston model. They essentially obtain for a given Euler/Milstein scheme a weak convergence rate of~$1$ under the restriction $\sigma^2<a$ on the parameter.

Like~\cite{AlNe}, we will work with the log-Heston model that solves the following SDE
\begin{equation}\label{log-Heston_SDEs}
  \begin{cases}
    dX^{x,y}_t &= (r-\frac{1}{2}Y^y_t) dt + \sqrt{Y^y_t} (\rho dW_t + \sqrt{1-\rho^2} dB_t), \ X^{x,y}_0=x=\log(s) \in \R,\\
    dY^y_t &= (a-bY^y_t) dt +\sigma \sqrt{Y^y_t} dW_t, \ Y^y_0=y.
  \end{cases}
\end{equation}
This log transformation of the asset price is standard to carry mathematical analyses: it allows to get an SDE with bounded moments since its coefficients have at most a linear growth. Our goal is to propose approximations of any order of the semigroup $P_Tf(x,y)=\E[f(X^{x,y}_T,Y^y_T)]$, where $f:\R \times \R_+ \to \R$ is a sufficiently smooth function such that $|f(x,y)|\le C(1+|x|^L+y^L)$ for some $L \in \N$. More precisely, we want to apply the recent method proposed by Alfonsi and Bally~\cite{AB} that allows to boost the convergence of an approximation scheme by using random time grids. We consider here either the Ninomiya-Victoir scheme for $\sigma^2\le 4a$ or a scheme that simulate exactly $Y$ for any $\sigma>0$.
In a previous work~\cite{AL}, we have applied the method of~\cite{AB} to the only Cox-Ingersoll-Ross component~$Y$ and we want to extend our result to the full log-Heston model. 
 One difficulty with respect to the general framework developed in~\cite{AB} is to deal with the singularity of the diffusion coefficient. In particular, we need some analytical results on the Cauchy problem associated to the log-Heston model that have been obtained recently by Briani et al.~\cite{BrCaTe}. Our main theoretical result (Theorem~\ref{main_theorem}) states that we obtain, for any $\nu\ge 1$, semigroup approximations of order $2\nu$ by using the mentioned scheme with the boosting method of~\cite{AB}.

 The paper is structured as follows. Section~\ref{Sec_main} presents the precise framework and in particular the functional spaces that we consider carrying our analysis. It introduces the approximation schemes and briefly presents the boosting method using random grids proposed in~\cite{AB}. The main result of the paper is then stated precisely.  Section~\ref{Sec_proof} is dedicated to the proof of the main theorem. Last, Section~\ref{Sec_num} explains how to implement our approximations and illustrates their convergence on numerical experiments motivated by the financial application. As an opening for future research, we show that our approximations can be used for the multifactor Heston model\footnote{We recall that the multifactor Heston model proposed by Abi Jaber and El Euch~\cite{AEE19} is an extension of the Heston model that is a good proxy of the rough Heston model introduced by El Euch and Rosenbaum~\cite{EER19}.} under some parameter restrictions and give very promising convergence results. \red{We also indicate a wide class of models that includes the Bates model and the double Heston model to which our approximations can be applied.}

\section{Main results}\label{Sec_main}

We start by introducing some functional spaces that are used through the paper.
For $k\in\N$, we denote by $\mcC^{k}(\R\times\R_+)$ the space of continuous functions $f:\R \times \R_+ \to \R$ such that the partial derivatives $\partial^\a_x \partial^\b_y  f(x,y)$ exist and are continuous with respect to $(x,y)$ for all $(\a,\b)\in\N^2$ such that $\a+2\b\le k$. We then define for $L\in \N$,
\begin{equation}\label{def_fL}
   {\bf f}_L(x,y)=  (1+x^{2L}+y^{2L}), \ x\in \R,y\in \R_+,
\end{equation}
and introduce 
\begin{multline}\label{def_CpolkL}
  \CPL{k}{L}=\{ f\in \mcC^{k}(\R\times\R_+) \mid \exists C>0 \text{ such that } \forall(\a,\b)\in\N^2, \a+2\b\le k,\\
  |\partial^\a_x \partial^\b_y  f(x,y)| \le C {\bf f}_L(x,y)   \},
\end{multline}
the space of continuously differentiable functions up to order~$k$ with derivatives with polynomial growth of order $2L$. Note that we assume twice less differentiability on the $y$ component.  
Furthermore, we set 
$$\CP{k}= \cup_{L\in \N} \CPL{k}{L} \text{ and } \CP{\infty}= \cap_{k\in \N} \CP{k}.$$
Last, we endow $\CPL{k}{L}$ with the following norm:
\begin{equation}\label{def_NormCpolKL}
  \|f\|_{k,L}=\sum_{\alpha+2\beta \le k} \sup_{(x,y) \in \R \times \R_+} \frac{|\partial^\alpha_x\partial^\beta_y f(x,y)|}{\bff_L(x,y)}.
\end{equation}

\subsection{Second order schemes for the log-Heston process}\label{subsec_2nd}

Having in mind~\cite[Theorem 2.3.8]{AA_book}, there are three properties to check to get a second-order scheme for the weak error:
\begin{enumerate}[(a)]
  \item polynomial estimates for the derivatives of the solution of the Cauchy problem,
  \item uniformly bounded moments of the approximation scheme, 
  \item a potential second order scheme, which roughly means that we have a family of random variables $(\hat{X}^{x,y}_t,\hat{Y}^y_t)_{t\ge 0}$ such that $|\E[f(\hat{X}^{x,y}_t,\hat{Y}^y_t)]-f(x,y)-t \mathcal{L} f(x,y)- \frac{t^2}2\mathcal{L}^2 f(x,y)|=_{t\to 0} O(t^3)$. 
\end{enumerate}

Let us precise this in our context. We consider a time horizon $T>0$ and a time step $h=T/n$, with $n\in \N^*$. We note $(\hat{X}^{x,y}_h,\hat{Y}^y_h)$ an approximation scheme for the SDE~\eqref{log-Heston_SDEs} starting from $(x,y)$ with time-step $h$, and $$\hat{P}_h f(x,y)=\E[f(\hat{X}^{x,y}_h,\hat{Y}^y_h)]$$ the associated semigroup approximation. The weak error analysis proposed by Talay and Tubaro~\cite{TT} consists in writing
\begin{equation}\label{eq_TT}
  \hat{P}^{[n]}_h-P_T=\hat{P}^{[n]}_h-P_h^{[n]}=\sum_{i=0}^{n-1}\hat{P}_h^{[n-(i+1)]}(\hat{P}_{h}-P_h)P_h^{[i]}=\sum_{i=0}^{n-1}\hat{P}_h^{[n-(i+1)]}(\hat{P}_{h}-P_h)P_{ih},
\end{equation}
where $\hat{P}_h^{[0]}=Id$ and $\hat{P}_h^{[i]}=\hat{P}_h^{[i-1]}\hat{P}_h$ for $i\ge 1$, and $P_h^{[i]}=P_{ih}$ by the semigroup property.   
Let us assume that the three properties hold
\begin{enumerate}[(a)]
\item $\forall k,L \in \N, \ \exists C \in \R_+, \ \forall i\in \{0,\dots,n\}, \  \|P_{ih} f\|_{k,L} \le C \|f\|_{k,L}$,
\item $\forall L \in \N, \exists C_L \in \R_+, \ \max_{0\le i\le n} \hat{P}_h^{[i]} \bff_L (x,y) \le C_L \bff_L(x,y)$,
\item $\|\hat{P}_h f-P_hf\|_{0,L+3}\le Ch^3 \|f\|_{12,L}.$
\end{enumerate}
Then, for $f \in \CPL{12}{L}$, we have for each $i \in \{0,\dots,N-1\},$
$$\|(\hat{P}_{h}-P_h)P_{ih}f \|_{0,L+3}\le Ch^3\|P_{ih}f \|_{12,L}\le C^2 h^3\|f \|_{12,L}, $$
by using the first and third properties. Then, we use that $$|(\hat{P}_{h}-P_h)P_{ih}f(x,y)|\le  C^2\|f \|_{12,L} h^3 \bff_{L+3}(x,y),$$
together with the second property to get that $|\hat{P}^{[n-(i+1)]}(\hat{P}_{h}-P_h)P_{ih}f(x,y)|\le C_LC^2 h^3\bff_L(x,y)$. This bound is uniform with respect to~$i$, and we get
\begin{equation}
  |\hat{P}^{[n]}_hf(x,y)-P_Tf(x,y)|\le C_LC^2 T \bff_{L+3}(x,y) \times \left(\frac{T}{n}\right)^2,
\end{equation}
since $h=T/n$.

Before concluding this paragraph, we comment briefly how to get the three properties (a--c). For the log-Heston SDE, the Cauchy problem has been studied by Briani et al.~\cite{BrCaTe} and their analysis allow to get~(a). Their result is reported in Proposition~\ref{prop-rep-logHeston-estim}. Property (b) can generally be checked by simple but sometimes tedious calculation. Property (c) is the crucial one and can be obtained by using splitting technique, see~\cite[Paragraph 2.3.2]{AA_book}. We check this property in Corollary~\ref{cor_H1} for the schemes presented in this paper.

\subsection{From the second order scheme to higher orders by using random grids}

We continue the analysis and present, in our framework, the method developed by Alfonsi and Bally~\cite{AB} to get approximations of any orders by using random grids. For $l \in \N^*$, let us define the time step
$h_l=\frac{T}{n^l}$. We set $Q_l=\hat{P}_{h_l}$ the operator obtained by using the approximation scheme with the time step $h_l$. The principle is to iterate the identity~\eqref{eq_TT}. Namely, we get from~\eqref{eq_TT}
$$ \hat{P}_{h_1}^{[i]}-P_{ih_1}=\sum_{i_1=0}^{i-1} \hat{P}^{[i-(i_1+1)]}_{h_1}(\hat{P}_{h_1}-P_{h_1})P^{[i_1]}_{h_1} $$
and 
$$ \hat{P}_{h_2}^{[n]}-P_{h_1}=\hat{P}_{h_2}^{[n]}-P_{h_2}^{[n]}=\sum_{j=0}^{n-1}\hat{P}^{[n-(j+1)]}_{h_2}(\hat{P}_{h_2}-P_{h_2})P^{[j]}_{h_2}.   $$
Plugging these two identities successively in~\eqref{eq_TT}, we obtain
\begin{equation}\label{eq_boost2}\hat{P}^{[n]}_{h_1}-P_T=\sum_{i=0}^{n-1}\hat{P}_{h_1}^{[n-(i+1)]}(\hat{P}_{h_1}-\hat{P}_{h_2}^{[n]})\hat{P}^{[i]}_{h_1}+R,\end{equation}
with the remainder given by
\begin{align*}
  R=&\sum_{i=0}^{n-1}\hat{P}_{h_1}^{[n-(i+1)]} \left(\sum_{j=0}^{n-1}\hat{P}^{[n-(j+1)]}_{h_2}(\hat{P}_{h_2}-P_{h_2})P^{[j]}_{h_2} \right)\hat{P}^{[i]}_{h_1} \\&- \sum_{i=0}^{n-1}\hat{P}_{h_1}^{[n-(i+1)]} (\hat{P}_{h_1}-P_{h_1}) \sum_{i_1=0}^{i-1} \hat{P}^{[i-(i_1+1)]}_{h_1}(\hat{P}_{h_1}-P_{h_1})P^{[i_1]}_{h_1} .\end{align*}
Let us assume that we have the two following properties\footnote{We directly specify the method to our framework, and refer to~\cite{AB} or~\cite[Section 2]{AL} for a general presentation.}
\begin{align}\tag{$\bbar{H_1}$}\label{H1_bar}
   &\forall l,k,L\in\N,\exists C>0,\forall f\in \CPL{k+12}{L},\, \|(P_{h_l}-Q_l)f\|_{k,L+3} \leq C\|f\|_{k+12,L} h_l^{3},\\
&\tag{$\bbar{H_2}$} \label{H2_bar}
  \forall l,k,L\in \N,\exists C>0,\forall f\in \CPL{k}{L}, \, \max_{0\leq j\leq n^l}\|Q^{[j]}_l f\|_{k,L} + \sup_{t\leq T}\|P_t f\|_{k,L} \leq C\|f\|_{k,L}.
\end{align}
Then, we can upper bound the remainder for $f\in  \CPL{k+24}{L}$ by
$$ \|R f\|_{k,L+6}\le C^3n^2 \|f\|_{k+12,L+3} h_2^3+C^5 \frac{n(n-1)}2 \|f\|_{k+24,L} (h_1^3)^2 \le \tilde{C}\|f\|_{k+24,L} \left(\frac{T}{n}\right)^4,$$
where we have used twice~\eqref{H2_bar} and once~\eqref{H1_bar} for the first sum, and three times~\eqref{H2_bar} and twice~\eqref{H1_bar} for the second one. Therefore, we get from~\eqref{eq_boost2} that
\begin{equation}\label{def_boost2}
  \cPh^{2,n}:= \hat{P}^{[n]}_{h_1}+\sum_{i=0}^{n-1}\hat{P}_{h_1}^{[n-(i+1)]}(\hat{P}_{h_2}^{[n]}-\hat{P}_{h_1})\hat{P}^{[i]}_{h_1}
\end{equation}
is an approximation of order~$4$. Namely, we get
\begin{equation}\label{boost2_estimate} \forall f \in \CP{24}, \  \exists C>0,L\in \N, \   \|\cPh^{2,n} f -P_Tf\|_{0,L+6}\le C \|f\|_{24,L} \left(\frac{T}{n} \right)^4. \end{equation}
Let us note that $\hat{P}_{h_1}^{[n-(i+1)]}\hat{P}_{h_2}^{[n]}\hat{P}_{h_1}^{[i]}$ corresponds to the scheme on a time grid that is uniform, but uniformly refined on the $(i+1)$-th time step. This time grid has thus $2n$ time steps, and if one should calculate all the terms in the sum of~\eqref{def_boost2}, this would require a computational time in $O(n^2)$. Thus, the method would not be more efficient that using the second-order scheme with a time step $n^2$. This is why we use random grids and use a random variable $\kappa$ that is uniformly distributed on $\{0,\dots,n-1\}$. We have 
\begin{equation}\label{boost2_RG} \cPh^{2,n} = \hat{P}^{[n]}_{h_1}+n\E[\hat{P}_{h_1}^{[n-(\kappa+1)]}(\hat{P}_{h_2}^{[n]}-\hat{P}_{h_1})\hat{P}^{[\kappa]}_{h_1}].\end{equation}
Thus, for the correcting term, we consider a random time grid that is the obtained from the uniform time grid with time step $T/n$ by refining uniformly the $(\kappa+1)$-th time step with a time step $h_2=T/n^2$. 

We have presented here how $\cPh^{2,n}$ improves the convergence of $\cPh^{1,n}=\hat{P}^{[n]}_{h_1}$. Then, for $\nu \ge 2$, it is possible to define by induction approximations $\cPh^{\nu,n}$, such that 
\begin{equation}\label{def_Pnu}\forall f \in \CP{12 \nu}, \  \exists C>0,L\in \N, \   \|\cPh^{\nu,n} f -P_Tf\|_{0,L+3\nu}\le C \|f\|_{12\nu,L} \left(\frac{T}{n} \right)^{2 \nu}. \end{equation}
Unfortunately, the induction cannot be easily described and involves a tree structure. We refer to~\cite{AB} for the details and to~\cite[Eq. (2.8)]{AL} for an explicit expression of $\cPh^{3,n}$.

\subsection{A second-order scheme for the log-Heston model}

Before presenting the scheme, it is interesting to point similarities and  difference between the weak error analysis  of Subsection~\ref{subsec_2nd} and the present one to get higher order approximations. Property~\eqref{H1_bar} is a generalization of~(c), while~\eqref{H2_bar} is stronger than properties (a) and (b)\footnote{Note that $\bff_L\in \CPL{\infty}{L}$. We have, for $i \le n$, $\|P^{[i]}_{T/n} \bff_L\|_{0,L} =\|Q_1^{[i]}\bff_L\|_{0,L}\le C \|\bff_L\|_{0,L}$ by \eqref{H2_bar}, which gives (b).}. We now point an important difference between the two error analysis. In Equation~\eqref{eq_TT}, the difference between the semigroup and its approximation appears only once and there is no need to have regularity properties for the function $(\hat{P}_{h}-P_h)P_{ih}f$: only a polynomial bound is needed. In contrast, for the approximation $\cPh^{2,n}$ we need some regularity to iterate and upper bound the remainder. This difference has an important incidence in the case of the log-Heston process. It is proposed in~\cite{AA_MCOM} a second-order scheme for the log-Heston process for any $\sigma \ge 0$. When $\sigma^2\ge 4a$, this scheme relies for the Cox-Ingersoll-Ross (CIR) part on bounded random variables that match the first moments of the standard normal distribution. Unfortunately, these moment-matching variables prevent us to get~\eqref{H2_bar}: in a recent work on high order approximations for the CIR process, we have shown in~\cite[Theorem 5.16]{AL} that it was not possible to use these moment-matching variables together with our analysis in order to get~\eqref{H2_bar}. We do not repeat here the analysis that would be quite similar for the log-Heston model, and consider either the Ninomiya-Victoir scheme for $\sigma^2\le 4a$ or the exact CIR simulation for any $\sigma>0$. We now present this in detail.

We present in this subsection the approximations scheme that we will study in this paper. It is constructed by using the splitting technique. Let  $\brho=\sqrt{1-\rho^2}$, the  infinitesimal generator associated to the log-Heston SDE~\eqref{log-Heston_SDEs} is given by 
\begin{equation}\label{log_Heston2_diff-op}
  \mcL = \frac{y}{2}(\partial^2_x + 2\rho\sigma\partial_x\partial_y+\sigma^2\partial_y)+ (r-\frac{y}{2})\partial_x + (a-by)\partial_y.
\end{equation}
We split this operator as the sum $\mcL = \mcL_B+\mcL_W$ where \begin{equation}\label{def_LB}\mcL_B = \big((r-\frac{\rho a}{\sigma})+(\frac{\rho b}{\sigma}-\frac 12)y\big) \partial_x +\frac{y}{2}\brho^2\partial_x^2\end{equation}  is the infinitesimal generator of the SDE
\begin{equation}\label{LB_SDEs}
  \begin{cases}
    dX_t &= \big((r-\frac{\rho a}{\sigma})+(\frac{\rho b}{\sigma}-\frac 12)Y_t\big) dt+\brho\sqrt{Y_t} dB_t,\\
    dY_t &= 0,
  \end{cases}
\end{equation}
and  \begin{equation}\label{def_LW}\mcL_W =\frac{y}{2}(\rho^2\partial^2_x + 2\rho\sigma\partial_x\partial_y+\sigma^2\partial^2_y) + (a-by)(\frac{\rho}{\sigma}\partial_x+\partial_y)\end{equation} is the infinitesimal generator of 
\begin{equation}\label{LW_SDEs}
  \begin{cases}
    dX_t &= (\frac{\rho a}{\sigma}-\frac{\rho b}{\sigma}Y_t) dt +\rho \sqrt{Y_t} dW_t,\\
    dY_t &= (a-bY_t) dt +\sigma \sqrt{Y_t} dW_t.
  \end{cases}
\end{equation}
This splitting is slightly different from the one considered in~\cite[Subsection 4.2]{AA_MCOM}: one should remark that it is chosen in order to have in~\eqref{LW_SDEs} $dX_t = \frac{\rho}{\sigma}dY_t$. This is not particularly useful to get a second order scheme. However, it avoids introducing a third coordinate corresponding to the integrated CIR process, which is more parsimonious for the mathematical  analysis.

We now present two different second order schemes for the log-Heston process, for which we will be able to prove the effectiveness of the higher order approximations. The first one simply consists in sampling exactly each SDE and then using the scheme composition introduced by Strang, see e.g.~\cite[Corollary 2.3.14]{AA_book}. More precisely, let $P^B$ (resp. $P^W$) denote the semigroup associated to the SDE~\eqref{LB_SDEs} (resp.~\eqref{LW_SDEs}). We define
\begin{equation}\label{def_PEx}
  \hat{P}_t^{Ex}=P^B_{t/2}P^W_t P^B_{t/2}.
\end{equation}
Let us note that the exact scheme for~\eqref{LB_SDEs} is explicit and given by
\begin{equation}\label{varphiB}
  \varphi_B(t,x,y,N)=(x+(r-\rho a/\sigma)t -(1/2-\rho b/\sigma)yt+\brho\sqrt{ty}N,~y), \ \text{ with } N\sim\mcN(0,1).
\end{equation}
We  indeed have $P^B_tf(x,y)=\E[f(\varphi_B(t,x,y,N))]$ for all $f \in \CP{0}$.
For the SDE~\eqref{LW_SDEs}, we have $P^W_tf(x,y)=\E[f(x+\frac{\rho}{\sigma}(Y^y_t-y),Y^y_t)]$, where $Y^y$ is the solution of~\eqref{log-Heston_SDEs}. Thus, it amounts to simulate exactly the $Y^y_t$, and we refer to~\cite[Section 3.1]{AA_book} for a presentation of different exact simulation methods.

However, in practice, the exact simulation of the Cox-Ingersoll-Ross process is longer than a simple Gaussian random variable, and it can be interesting to use an approximation scheme. We use here the one introduced by Ninomiya and Victoir~\cite{NV}. Following Theorem 1.18 in~\cite{AA_MCOM}, we rewrite $\mcL_W=\mcL_0+\mcL_1$ where 
\begin{equation}\label{def_L0L1}
  \mcL_0 =(a-\frac{\sigma^2}{4}-by)(\frac{\rho}{\sigma}\partial_x + \partial_y ),\qquad \mcL_1 =\frac{y}{2}(\rho^2\partial^2_x + 2\rho\sigma\partial_x\partial_y+\sigma^2\partial^2_y) + \frac{\rho \sigma}{4}\partial_x+ \frac{\sigma^2}4 \partial_y,
\end{equation}
are the infinitesimal generator respectively associated to
$$\begin{cases} dX_t &= (\frac{\rho }{\sigma}(a-\sigma^2/4)-\frac{\rho b}{\sigma}Y_t) dt\\
dY_t &= (a-\sigma^2/4-bY_t) dt  \end{cases} \text{ and }
\begin{cases} dX_t &= \frac{\rho \sigma }{4} dt +\rho \sqrt{Y_t} dW_t\\
  dY_t &= \frac{\sigma^2}4 dt +\sigma \sqrt{Y_t} dW_t. \end{cases}$$
Let $\psi_b(t)=\frac{1-e^{-bt}}b$ (convention $\psi_b(t)=t$ for $b=0$) and define
\begin{align}
  \varphi_0(t,x,y)&=\Big(x-\frac{\rho b}{\sigma}\psi_b(t)y +\frac{\rho}{\sigma}\psi_b(t)(a-\frac{\sigma^2}{4}),~e^{-bt}y+\psi_b(t)(a-\frac{\sigma^2}{4})\Big), \label{varphi0} \\
  \varphi_1(t,x,y)&=\Big(x+\frac{\rho }{\sigma}\big((\sqrt{y}+\frac{\sigma t}{2})^2-y\big),~(\sqrt{y}+\frac{\sigma t}{2})^2\Big). \label{varphi1}
\end{align}
We have for $t \ge 0$ and $f \in \CP{0}$, 
\begin{equation}\label{def_P0P1}
  P^0_t f(x,y)= f(\varphi_0(t,x,y)) \text{ and } P^1_t f(x,y)= \E[f(\varphi_1(\sqrt{t}G,x,y))], \text{ with } G\sim \mathcal{N}(0,1).  
\end{equation}
Indeed, $\varphi_0$ is the exact solution of the ODE associated to~$\mathcal{L}_0$, starting from $(x,y)$ at time~$0$, and we can show by Itô calculus that $\varphi_1(W_t,x,y)$ is an exact solution of the SDE associated to~$\mathcal{L}_1$, starting from $(x,y)$ at time~$0$, and with the Brownian motion $d\tilde{W}_t=\text{sgn}\left(\sqrt{y}+\frac \sigma 2 W_t \right)dW_t$. The Ninomiya-Victoir scheme~\cite{NV} for $\mathcal{L}_W$ is then $P^0_{t/2}P^1_tP^0_{t/2}$, and we define \begin{equation}\label{def_PNV}
  \hat{P}_t^{NV}=P^B_{t/2}P^0_{t/2}P^1_tP^0_{t/2} P^B_{t/2}, \text{ when } \sigma^2\le 4a.
\end{equation}
This scheme is well defined only for $\sigma^2\le 4a$, otherwise $\varphi_0$ may send the $y$ component to negative values, and the composition is not well defined. This problem was pointed in~\cite{AA_MCOM} that introduces a second order scheme for any $\sigma\ge 0$. For this scheme,  the normal variable~$G$ in~\eqref{def_P0P1} is replaced by a bounded random variable that matches the five first moments of~$G$ and besides, an ad hoc discrete scheme is used in the neighbourhood of~$0$. However, as indicated in the introduction of this subsection, this prevents us with our analysis to get~\eqref{H2_bar} and thus approximations of higher order. This is why we only consider the Ninomiya-Victoir scheme here.

We now state  the main theorem of the paper.
\begin{theorem}\label{main_theorem}
 Let $\hat{P}_t$ be either $\hat{P}_t^{Ex}$ defined by~\eqref{def_PEx} or $\hat{P}_t^{NV}$ by~\eqref{def_PNV}. Let $T>0$, $n\in \N^*$ and $h_l=T/n^l$. Let $\cPh^{1,n}=\hat{P}_{h_1}^{[n]}$, $\cPh^{2,n}$ be defined by~\eqref{def_boost2} and  $\cPh^{\nu,n}$ the further approximations developed in~\cite{AB}. Let $\nu \ge 1$.  For any $f\in \CP{12\nu}$ $x\in \R$ and $y\ge 0$, we have
  $$\cPh^{\nu,n}f(x,y)-P_T f(x,y) = O(1/n^{2\nu}).$$ 
\end{theorem}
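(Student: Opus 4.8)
The plan is to deduce Theorem~\ref{main_theorem} from the abstract machinery of~\cite{AB}: once the two structural hypotheses~\eqref{H1_bar} and~\eqref{H2_bar} are verified for $Q_l=\hat{P}_{h_l}$ (with $\hat{P}=\hat{P}^{Ex}$ or $\hat{P}^{NV}$) together with the exact semigroup $P$, the estimate~\eqref{def_Pnu} holds for every $\nu\ge 1$ through the inductive construction of $\cPh^{\nu,n}$ (the cases $\nu=1,2$ being~\eqref{eq_TT} and~\eqref{eq_boost2}--\eqref{boost2_estimate}), and evaluating~\eqref{def_Pnu} at a fixed $(x,y)$ gives the claimed $O(1/n^{2\nu})$. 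The part of~\eqref{H2_bar} about the true semigroup, $\sup_{t\le T}\|P_tf\|_{k,L}\le C\|f\|_{k,L}$, is exactly the polynomial control on the derivatives of the solution of the log-Heston Cauchy problem, i.e.\ Proposition~\ref{prop-rep-logHeston-estim} (from~\cite{BrCaTe}). So the work reduces to two points: the stability of the discrete semigroups, $\max_{0\le j\le n^l}\|Q_l^{[j]}f\|_{k,L}\le C\|f\|_{k,L}$ with $C$ independent of $n$; and the third-order local error bound~\eqref{H1_bar}.

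For the discrete stability I would argue one time step at a time. The schemes are finite compositions of the operators $P^B_s$ (the explicit Gaussian map $\varphi_B$) and either $P^W_s$ (exact CIR, i.e.\ $P^W_sf(x,y)=\E[f(x+\frac{\rho}{\sigma}(Y^y_s-y),Y^y_s)]$) or $P^0_s,P^1_s$ (the explicit maps $\varphi_0,\varphi_1$). The crux is to show each elementary operator $R_s$ sends $\CPL{k}{L}$ into itself with $\|R_sf\|_{k,L}\le(1+Cs)\|f\|_{k,L}$ for $s\le T$; then $\|Q_l^{[j]}f\|_{k,L}\le(1+Ch_l)^{j}\|f\|_{k,L}\le e^{CT}\|f\|_{k,L}$ for all $j\le n^l$, uniformly in $n$ (the finite number of sub-steps per time step being absorbed in $C$). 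This per-step bound combines uniform moment estimates of the form $\E[\bff_L(\text{flow}_s(x,y))]\le(1+Cs)\bff_L(x,y)$ (obtained by expanding the explicit flows and the CIR moments) with control of the derivatives $\partial_x^{\alpha}\partial_y^{\beta}(R_sf)$ via differentiation under the expectation and the chain rule. For $\varphi_0$ (affine) this is straightforward; the delicate point is the $\sqrt{y}$ in $\varphi_B$ and $\varphi_1$, which makes the naive $y$-derivatives blow up like $y^{-1/2}$ at the boundary. These singularities are spurious once one integrates over the Gaussian: a Gaussian integration by parts turns $\frac{\brho\sqrt{s}}{2\sqrt{y}}\,N\,\partial_x f$ into $\frac{\brho^2 s}{2}\,\partial_x^2 f$ for $P^B_s$, and the symmetry $G\mapsto-G$ makes $u\mapsto P^1_sf(x,u^2)$ an even function of $u$, hence $P^1_sf$ a smooth function of $y$ with derivatives of polynomial growth. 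This is precisely the analysis carried out for the CIR component in~\cite{AL}; the new ingredient is that the log-price coordinate is transported additively by the same Brownian increments, so these bounds extend to the full two-dimensional scheme.

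For the local error~\eqref{H1_bar}, the case $k=0$ is the potential second order property obtained by the splitting technique in Corollary~\ref{cor_H1}: one expands $P_h$ and the Strang composition $Q_l$ (with $h=h_l$) to second order in $h$ in the weak sense—using the exact ODE flow $\varphi_0$ for $P^0_s$, the fact that $\varphi_1(W_\cdot,x,y)$ is an \emph{exact} solution of the SDE of $\mcL_1$, the Gaussian formula~\eqref{varphiB} for $P^B_s$, and Proposition~\ref{prop-rep-logHeston-estim} for $P_h$—and checks that the $h^0,h^1,h^2$ terms coincide (consistency of Strang splitting). The remainder is then a sum of $h^3$ times iterated compositions of three of the generators $\mcL_B,\mcL_0,\mcL_1$ (resp.\ $\mcL_B,\mcL_W$) applied to $f$ and evaluated along the sub-flows. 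Since each generator is a second-order operator whose highest-order piece, $\frac{\sigma^2 y}{2}\partial_y^2$, costs $4$ in the grading $\alpha+2\beta$ and multiplies by one power of $y$, a product of three of them costs at most $12$ derivatives and $3$ units of polynomial weight; with the moment bounds on the sub-flows this yields $\|(P_{h_l}-Q_l)f\|_{0,L+3}\le C\|f\|_{12,L}h_l^3$. For general $k$—needed here, unlike in the argument behind~\eqref{eq_TT}, because the correction must be iterated—one differentiates the same expansion $k$ times, which raises the derivative index by $k$ and leaves the weight shift unchanged, giving~\eqref{H1_bar}.

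The main obstacle is the interaction between (a) the non-Lipschitz coefficient $\sqrt{y}$, present in $\varphi_B$ and $\varphi_1$ and producing inverse powers of $y$ upon differentiation in the $y$-variable—handled by Gaussian integration by parts and by keeping the CIR step in the nonnegative-square form $(\sqrt{y}+\sigma s/2)^2$, so that no negative power of $y$ survives an expectation—and (b) the bookkeeping of how many derivatives and how much polynomial weight each composition and each differentiation consumes, which must stay compatible with the fixed loss of $12$ derivatives and $3$ weight units in~\eqref{H1_bar} and with the strict near-contractive stability in~\eqref{H2_bar}. Once both hypotheses are established, Theorem~\ref{main_theorem} follows: for $\nu=1$ from~\eqref{eq_TT}, for $\nu=2$ from~\eqref{def_boost2}--\eqref{boost2_estimate}, and for general $\nu$ from the tree-indexed induction of~\cite{AB}.
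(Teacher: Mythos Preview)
Your overall strategy is correct and matches the paper's: reduce Theorem~\ref{main_theorem} to verifying~\eqref{H1_bar} and~\eqref{H2_bar}, then invoke~\eqref{eq_TT}, \eqref{boost2_estimate}, and the tree construction of~\cite{AB}. The difference lies in how you propose to establish~\eqref{H2_bar} for the elementary building blocks $P^B,P^W,P^0,P^1$.

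You take a direct computational route: differentiate under the expectation, apply the chain rule to the explicit maps $\varphi_B,\varphi_0,\varphi_1$, and neutralize the $y^{-1/2}$ singularities coming from $\sqrt{y}$ via Gaussian integration by parts (for $\varphi_B$) and the even-in-$G$ structure (for $\varphi_1$). This works, and is in the spirit of the CIR analysis in~\cite{AL}, but it is laborious---especially for higher $\partial_y^\beta$---and for $P^W$ (the exact CIR) you still need a way to differentiate $y\mapsto Y^y_s$, which ultimately forces you back to a stochastic representation of Bismut--Elworthy type anyway.

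The paper sidesteps all of this with a single structural observation: each of $P^B,P^W,P^0,P^1$ is itself the semigroup of an SDE of the \emph{extended} log-Heston form~\eqref{log-Heston-ext_SDEs}, just with different choices of the parameters $(\tilde a,\tilde b,\tilde c,\tilde d,\tilde\lambda,\tilde\sigma,\tilde\rho)$ (listed in the proof of Lemma~\ref{lem_H2_estimate}). Hence Proposition~\ref{prop-rep-logHeston-estim} applies verbatim to each, yielding $\|P^I_t f\|_{k,L}\le e^{Ct}\|f\|_{k,L}$ with no further work; composing gives $\|\hat P_t f\|_{k,L}\le e^{Ct}\|f\|_{k,L}$ and~\eqref{H2_bar} follows by iteration. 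The same observation feeds Lemma~\ref{lem_expan_CPL_scheme}: since each $P^I$ is a genuine semigroup of the form~\eqref{log-Heston-ext_SDEs}, It\^o's formula gives the clean Taylor expansion $P^I_t f=\sum_{q<\bar q}\frac{t^q}{q!}\mcL_I^q f+\text{remainder}$ with the right norm control, and the composition Lemma~\ref{lem_compo} then packages the Strang product into Corollary~\ref{cor_H1} with the precise loss of $12$ derivatives and $3$ weight units.

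In short: your approach would succeed but requires several pages of chain-rule and IBP bookkeeping; the paper's approach collapses the whole stability analysis into four parameter substitutions into a single regularity proposition.
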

\begin{proof}
  Property~\eqref{H1_bar} is proved in Corollary~\ref{cor_H1} and~\eqref{H2_bar} in Lemma~\ref{lem_H2_estimate}. For $f\in \CP{12\nu}$, there exists $L$ such that $f\in \CPL{12\nu}{L}$. Let $\nu=1$. We get from~\eqref{eq_TT}, 
  $$\|\cPh^{1,n}f-P_Tf\|_{0,L}= \|\sum_{i=0}^{n-1}\hat{P}_{h_1}^{[n-(i+1)]}(\hat{P}_{h_1}-P_{h_1})P_{i{h_1}}f\|_{0,L}\le C^3T \|f\|_{12,L+3} \left(\frac{T}{n} \right)^{2} , $$
  since $\| \hat{P}_{h_1}^{[n-(i+1)]}(\hat{P}_{h_1}-P_{h_1})P_{i{h_1}}f \|_{0,L} \le C\|(\hat{P}_{h_1}-P_{h_1})P_{i{h_1}}f\|_{0,L} \le C^2\|P_{i{h_1}}f\|_{12,L+3} h_1^3 \le C^3\|f\|_{12,L+3} h_1^3$ by using~\eqref{H2_bar}, then~\eqref{H1_bar} and again~\eqref{H2_bar}. This shows the claim for $\nu=1$. For $\nu=2$ (resp. $\nu \ge 3$), the claim is a consequence of~\eqref{boost2_estimate} (resp.~\eqref{def_Pnu}).
\end{proof}
\red{\begin{remark}\label{rk_regularity}
The regularity assumption on~$f$ required by Theorem~\ref{main_theorem} is rather strong, but needed with our analysis. In our numerical experiments of Section~\ref{Sec_num}, we however observe similar rates of convergence for some functions that do not satisfy the assumption of Theorem~\ref{main_theorem}. It would thus be interesting to relax this regularity assumption. In this direction, Rey~\cite{Rey} has recently extended the results of Alfonsi and Bally~\cite{AB} to bounded measurable functions for the Euler scheme under a H\"ormander condition, which ensures sufficient regularization property from the scheme. In the specific case of the log-Heston model, we could use for $|\rho|<1$ the argument of Romano and Touzi~\cite{RoTo} to regularize the payoff. This is left for further research. 
\end{remark}
}

\section{Proof of the main result}\label{Sec_proof}

\subsection{Preliminary results on the norm}

The next lemma gathers basic properties of the family of norms $\|\cdot\|_{k,L}$ defined in Equation~\eqref{def_NormCpolKL}.

\begin{lemma}\label{basic_lemma_CPL}
  Let $k,L\in\N$. We have the following basic properties:
  \begin{enumerate}
    \item $\CPL{k+1}{L}\subset \CPL{k}{L}$. For $f\in \CPL{k+1}{L}$, we have $\|f\|_{k,L}\le \|f\|_{k+1,L}$.
    \item Let $k,\a',\b'\in\N$. For $f\in\CPL{k+\a'+2\b'}{L}$ one has $\|\partial_x^{\a'} \partial_y^{\b'}  f\|_{k,L}\le \|f\|_{k+\alpha'+2\beta',L}$.
    \item $\CPL{k}{L}\subset \CPL{k}{L+1}$ and $\|f\|_{k,L+1}\le 3\|f\|_{k,L}$ for $f\in \CPL{k}{L}$.
    \item Let $\mcM_1$ be the operator defined by $\mcM_1 f(x,y)=yf(x,y)$. For $f\in \CPL{k}{L}$, we have  $\mcM_1f\in \CPL{k}{L+1}$  and $\| \mcM_1f\|_{k,L+1}\le \frac{3}{2}(k+1) \|f\|_{k,L}$.
    \item Let $\mcL$, $\mcL_B$, $\mcL_W$, $\mcL_0$ and $\mcL_1$ the infinitesimal generators defined in Equations~\eqref{log_Heston2_diff-op}, \eqref{def_LB}, \eqref{def_LW} and~\eqref{def_L0L1}. Then, for all $k\in \N$, there exists a constant $C(k)$ such that 
    \begin{align*} &\forall L\in \N, f\in\CPL{k+4}{L}, \ \|\mcL f\|_{k,L+1} + \|\mcL_W f\|_{k,L+1} + \|\mcL_1 f\|_{k,L+1}  \le C(k) \|f\|_{k+4 ,L}, \\
      &\forall L\in \N, f\in\CPL{k+2}{L}, \ \|\mcL_B f\|_{k,L+1}+\|\mcL_0 f\|_{k,L+1} \le C(k)  \|f\|_{k+2 ,L}.\end{align*}
  \end{enumerate}
\end{lemma}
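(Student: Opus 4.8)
The plan is to prove each of the five items by direct computation with the norm $\|\cdot\|_{k,L}$, all of which reduce to elementary manipulations of the polynomial weight $\bff_L(x,y)=1+x^{2L}+y^{2L}$ and the Leibniz rule. Items (1)--(3) are essentially bookkeeping: for (1), observe that the sum defining $\|f\|_{k,L}$ ranges over $\{(\a,\b):\a+2\b\le k\}\subset\{(\a,\b):\a+2\b\le k+1\}$, so it has fewer nonnegative terms; for (2), note that $\partial_x^{\a}\partial_y^{\b}(\partial_x^{\a'}\partial_y^{\b'}f)=\partial_x^{\a+\a'}\partial_y^{\b+\b'}f$ and $(\a+\a')+2(\b+\b')\le k+\a'+2\b'$ whenever $\a+2\b\le k$, so the defining sum for $\|\partial_x^{\a'}\partial_y^{\b'}f\|_{k,L}$ is a sub-sum of that for $\|f\|_{k+\a'+2\b',L}$; for (3), the only point is the pointwise inequality $\bff_L(x,y)\le 3\,\bff_{L+1}(x,y)$, which follows from $1\le\bff_{L+1}$, $x^{2L}\le 1+x^{2L+2}\le\bff_{L+1}$ and $y^{2L}\le 1+y^{2L+2}\le\bff_{L+1}$ (using $t^{2L}\le 1+t^{2L+2}$ for $t\ge0$), hence $|\partial_x^\a\partial_y^\b f|/\bff_{L+1}\le 3|\partial_x^\a\partial_y^\b f|/\bff_L$ term by term.

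For item (4), I would apply the Leibniz rule to $\partial_x^\a\partial_y^\b(yf)$: since $y$ is independent of $x$ and linear in $y$, only two terms survive, namely $\partial_x^\a\partial_y^\b(yf)=y\,\partial_x^\a\partial_y^\b f+\b\,\partial_x^\a\partial_y^{\b-1}f$ (the second term absent when $\b=0$). Then $|\partial_x^\a\partial_y^\b(yf)|\le y|\partial_x^\a\partial_y^\b f|+\b|\partial_x^\a\partial_y^{\b-1}f|$, and dividing by $\bff_{L+1}$ I use $y\le y\,\bff_L/\bff_L$ together with $y\,\bff_L(x,y)\le \tfrac32\bff_{L+1}(x,y)$ — indeed $y(1+x^{2L}+y^{2L})=y+yx^{2L}+y^{2L+1}$, and each of $y$, $yx^{2L}$, $y^{2L+1}$ is bounded by $\tfrac12\bff_{L+1}$ up to the usual $2uv\le u^2+v^2$ estimates (e.g. $y\le\tfrac12(1+y^2)\le\tfrac12\bff_{L+1}$ when $L\ge1$, and one handles the $L=0$ and $yx^{2L}$ cases similarly via Young's inequality). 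For the lower-order term I bound $|\partial_x^\a\partial_y^{\b-1}f|\le\bff_L\|f\|_{k,L}\le\bff_{L+1}\|f\|_{k,L}$ since that derivative also satisfies $\a+2(\b-1)\le k$. Summing over the at most... well, the $\b$ factor combined with counting the terms gives a bound linear in $k$; the constant $\tfrac32(k+1)$ comes from at most $\tfrac32$ from the weight estimate, and the factor $(k+1)$ is a crude bound for $\max\b\le k/2$ plus $1$ absorbed into the sum reshuffling (I would not optimize this constant).

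Item (5) is the only one requiring a little care, and it is the main (mild) obstacle. The operators $\mcL$, $\mcL_W$, $\mcL_1$ have the shape $\text{(polynomial of degree $\le1$ in $y$)}\times\text{(derivative of order $\le2$)}$, so for each monomial coefficient $c\,y^{j}$ with $j\in\{0,1\}$ multiplying $\partial_x^{p}\partial_y^{q}$ with $p+2q\le 4$, I estimate $\|c\,y^j\partial_x^p\partial_y^q f\|_{k,L+1}$ using item (4) at most once (for the $y^1$ coefficients) and item (2) with $(\a',\b')=(p,q)$ (so $\a'+2\b'\le4$), giving a bound $\le C\|f\|_{k+4,L}$; the terms with $j=0$ land directly in $\CPL{k}{L}\subset\CPL{k}{L+1}$ via item (3) but I keep them at level $L+1$ uniformly. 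Summing the finitely many such terms — whose number and the internal combinatorial constants depend only on $k$ — yields the constant $C(k)$. For $\mcL_B$ and $\mcL_0$ the derivatives are of order $\le2$ as well but the relevant shift is only $+2$ because $\mcL_B f$ and $\mcL_0 f$ involve $\partial_x$, $\partial_x^2$ and $\partial_y$ (so $\a'+2\b'\le2$), hence the same argument with item (2) applied at shift $2$ gives $\le C(k)\|f\|_{k+2,L}$. I do not expect any genuine difficulty here beyond carefully matching the derivative-order budget $\a'+2\b'$ to the stated loss of regularity ($+4$ versus $+2$), which is exactly why the spaces $\mcC^k$ are defined with the weighting $\a+2\b\le k$ that treats one $y$-derivative like two $x$-derivatives.
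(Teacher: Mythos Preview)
Your proposal is correct and follows essentially the same route as the paper: items (1)--(3) by direct bookkeeping with the inequality $\bff_L\le 3\,\bff_{L+1}$, item (4) via the Leibniz rule plus the weight estimate $y\,\bff_L\le\tfrac32\,\bff_{L+1}$, and item (5) by decomposing each generator into terms of the form $c\,y^j\partial_x^p\partial_y^q$ and applying (2), (3), (4). One small slip: in your item (4) you write $\bff_L\le\bff_{L+1}$ for the lower-order term, but only $\bff_L\le 3\,\bff_{L+1}$ holds (as you proved in (3)); this merely changes constants and, since you explicitly do not optimize them, does not affect the argument.
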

\begin{proof}
  Property (1)-(2) are straightforward. We prove only (3), (4) and (5).

  \noindent(3) Let $a>0$ than $a^{2L}\le 1+a^{2L+2}$. So, we get  $\bff_L(x,y)=1+x^{2L}+y^{2L}\le 3(1+x^{2(L+1)}+y^{2(L+1)})=3\bff_{L+1}(x,y)$ and then $\frac{1}{\bff_{L+1}(x,y)}\le 3 \frac{1}{\bff_{L}(x,y)}$. This gives immediately the claim.

  \noindent(4) Let $f\in \CPL{k}{L}$.  Applying Leibniz rule, one obtains $\partial^\a_x\partial_y^\b \mcM_1f = \b\partial^\a_x\partial_y^{\b-1}f +\mcM_1\partial^\a_x\partial_y^\b f$ for $\a,\b \in \N$. Now, we write 
  \begin{align*}
    \frac{|\partial^\a_x\partial_y^\b[yf(x,y)]|}{\bff_{L+1}(x,y)}&\le \frac{\b|\partial^\a_x\partial_y^{\b-1} f(x,y)|}{\bff_{L+1}(x,y)} + \frac{y|\partial^\a_x\partial_y^\b f(x,y)|}{\bff_{L+1}(x,y)} \\
    &\le  \frac{3\b|\partial^\a_x\partial_y^{\b-1} f(x,y)|}{\bff_{L}(x,y)} +\frac{3|\partial^\a_x\partial_y^\b f(x,y)|}{2 \bff_{L}(x,y)},
  \end{align*}
  where we used the comparison above between $\bff_L$ and $\bff_{L+1}$  for the first term and, for the second term,  
  $y \bff_L(x,y) =y+y^{2L+1}+yx^{2L}+\le 1+y^{2L+2}+\frac{1+y^2}{2}x^{2L}\le \frac 32  \bff_{L+1}(x,y)$ by using $y+y^{2L+1}\le 1+y^{2L+2}$ and then Young's inequality. Then, we obtain 
  \begin{align*}
    \|\mcM_1 f\|_{k,L+1} &\le \sum_{\a+2\b\le k} \left(3\beta \sup_{(x,y)\in \R \times \R_+}\frac{3\b|\partial^\a_x\partial_y^{\b-1} f(x,y)|}{\bff_{L}(x,y)}  + \frac 3 2 \sup_{(x,y)\in \R \times \R_+}\frac{3\b|\partial^\a_x\partial_y^{\b} f(x,y)|}{\bff_{L}(x,y)} \right) \\
    &\le 3\left( \lfloor k/2 \rfloor +1/2 \right) \| f\|_{k,L}.
  \end{align*}
  \noindent(5) We prove only the estimate for $\mcL$, the others are obtained using the same arguments. We have $\|\mcL f\|_{k,L+1}\le \frac{1}{2}\|\mcM_1(\partial_x^2+2\rho\sigma\partial_x\partial_y+\sigma^2\partial_y^2-\partial_x-2b\partial_y) f\|_{k,L+1} + \|(r\partial_x+a\partial_y) f\|_{k,L+1}$, by using linearity of derivatives and the triangular inequality.  
  We conclude using property (4) and (2) for the first term,  (2) and (3) for the second and finally property number (1) to upper bound $\|\mcL f\|_{k,L+1}$ by $C(k)\|f\|_{k+4,L+1}$, where $C(k)$ is  a constant  depending on $k$ and on the parameters ($r,\rho,a,b$ and $\sigma$).
\end{proof}

\subsection{The Cauchy problem of the Log-Heston SDE}

In this subsection, we aim at proving the estimates on the derivatives of the Cauchy problem.  The representation formula presented below is a result of Briani, Caramellino and Terenzi~\cite{BrCaTe}.

\begin{prop}\label{prop-rep-logHeston-estim}
  Let $k,L \in \mathbb{N}$ and suppose that $f \in \CPL{k}{L}$. Let $\lambda \ge 0$, $c,d \in \R$. Let $(X^{t, x, y}, Y^{t, y})$ be the solution to the SDE, for $s\ge t$, 
  \begin{equation}\label{log-Heston-ext_SDEs}
    \begin{cases}
      X^{t,x,y}_s &= x +\int_t^s (c+dY^y_r) dr + \int_t^s  \lambda \sqrt{Y^y_r} (\rho dW_r + \sqrt{1-\rho^2} dB_r)\\
      Y^{t,y}_s &= y+ \int_t^s (a-bY^y_r) dr +\sigma \int_t^s  \sqrt{Y^y_r} dW_r,
    \end{cases}
  \end{equation}
 and set
  $$ u(t, x, y)=\E[f(X_{T}^{t, x, y}, Y_{T}^{t, y})]=P_{T-t}f(x,y).$$
  Then, $u(t,\cdot,\cdot)\in \CPL{k}{L}$ and the following stochastic representation holds for $\a+2\beta \le k$,
  \small
  \begin{multline}\label{stoc_repr-new}
  \partial_{x}^{\a} \partial_{y}^{\b} u(t, x, y)=\E\bigg[e^{-\b b(T-t)} \partial_{x}^{\a} \partial_{y}^{\b} f\big(X_{T}^{\b, t, x, y}, Y_{T}^{\b,t, y}\big) \\
   \quad+\b \int_{t}^{T}e^{-\b b(s-t)}\Big(\frac{\lambda^2}{2} \partial_{x}^{\a+2} \partial_{y}^{\b-1} u + d \partial_{x}^{\a+1} \partial_{y}^{\b-1} u \Big)\big(s, X_{s}^{\b, t, x, y}, Y_{s}^{\b,t, y}\big) d s\bigg],
  \end{multline}
  where $\partial_{x}^{\a} \partial_{y}^{\b-1}  u:=0$ when $\beta=0$ and $(X^{\b, t, x, y}, Y^{\b, t, y}), \beta \geq 0$, denotes the solution starting from $(x, y)$ at time $t$ to the SDE \eqref{log-Heston-ext_SDEs}  with parameters
  \begin{equation}\label{parameters-new}
    \rho_{\b}=\rho,  \quad a_{\b}=a+\b \frac{\sigma^{2}}{2}, \quad b_{\beta}= b, \quad c_{\b}=r+\b \rho \sigma \lambda, \quad d_{\b}=d, \quad \sigma_{\b}=\sigma.
  \end{equation}
  Moreover, one has the following norm estimation for the semigroup   \begin{equation}
   \forall k, L \in \N, T>0, \  \exists C, \forall f \in  \CPL{k}{L}, t\in[0,T], \   \|P_tf\|_{k,L}\le  \|f\|_{k,L} e^{Ct}.
  \end{equation}
\end{prop}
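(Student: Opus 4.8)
The representation formula~\eqref{stoc_repr-new}, together with the regularity claim $u(t,\cdot,\cdot)\in\CPL{k}{L}$, is essentially the content of~\cite{BrCaTe} (up to the harmless replacement of the log-Heston coefficients by the affine coefficients $c+dy$ and $\lambda\sqrt{y}$ of~\eqref{log-Heston-ext_SDEs}), and I would first recall why it holds before deriving the norm estimate from it. Differentiation in $x$ is immediate: the coefficients of~\eqref{log-Heston-ext_SDEs} do not depend on $x$, so $\partial_x^\a X^{t,x,y}_s=\ind{\a=0}$ and $\partial_x^\a Y^{t,y}_s=0$ for $\a\ge1$, whence $\partial_x^\a u(t,x,y)=\E[\partial_x^\a f(X_T^{t,x,y},Y_T^{t,y})]$ once the right-hand side is well defined. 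The delicate point is differentiation in $y$: the CIR component has the non-Lipschitz coefficient $\sigma\sqrt{y}$, so one cannot differentiate the flow $y\mapsto Y^{t,y}$ pathwise. The key is the affine structure of CIR, which yields the identity $\partial_y\E[g(Y^{t,y}_s)]=e^{-b(s-t)}\E[g'(\widetilde Y^{t,y}_s)]$, where $\widetilde Y$ is the CIR process with $a$ replaced by $a+\sigma^2/2$. Applying this to $u$ and tracking the additional contributions of the $Y$-dependence of $X$ through the drift $dY\,dr$ and the martingale $\lambda\sqrt{Y}(\rho dW+\brho dB)$ — which, by Itô's formula, produce respectively the $d\,\partial_x^{\a+1}\partial_y^{\b-1}u$ term, the $\frac{\lambda^2}{2}\partial_x^{\a+2}\partial_y^{\b-1}u$ term, and the shift $c\mapsto c+\b\rho\sigma\lambda$ — one gets~\eqref{stoc_repr-new} by induction on $\b$, first for smooth $f$ with bounded derivatives and then for $f\in\CPL{k}{L}$ by truncation together with the moment bounds below. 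I refer to~\cite{BrCaTe} for the details.

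Granting~\eqref{stoc_repr-new}, I would prove the semigroup estimate by a Grönwall argument based on two ingredients. First, for each $\b\in\{0,\dots,\lfloor k/2\rfloor\}$ the diffusion~\eqref{log-Heston-ext_SDEs} with parameters~\eqref{parameters-new} has coefficients of at most linear growth, and a direct computation gives the pointwise sub-invariance $\mcL_{(\b)}\bff_L\le C\bff_L$, where $\mcL_{(\b)}$ is the generator associated with~\eqref{parameters-new}: indeed $\mcL_{(\b)}\bff_L$ is a linear combination of the monomials $y\,x^{2L-2}$, $y^{2L-1}$, $x^{2L-1}$ (all of degree $2L-1$), $y\,x^{2L-1}$ and $y^{2L}$, and each of these is dominated by $\bff_L$ via Young's inequality; the constant $C$ is uniform in $\b$ since $a_\b=a+\b\sigma^2/2\le a+k\sigma^2/4$. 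By Dynkin's formula and Grönwall's lemma this yields $\E[\bff_L(X_s^{\b,t,x,y},Y_s^{\b,t,y})]\le\bff_L(x,y)e^{C(s-t)}$ for $t\le s\le T$. Second, I plug these into~\eqref{stoc_repr-new}. Writing $N(s)=\|P_sf\|_{k,L}$, the first term of~\eqref{stoc_repr-new} contributes at most $\|f\|_{k,L}e^{Cs}$ to $N(s)$ (summing $\sup|\partial_x^\a\partial_y^\b f|/\bff_L$ over $\a+2\b\le k$ reconstructs exactly $\|f\|_{k,L}$, and the factors $e^{-\b b(T-t)}$ are absorbed into $e^{Cs}$), while in the time-integral term the derivatives $\partial_x^{\a+2}\partial_y^{\b-1}u$ and $\partial_x^{\a+1}\partial_y^{\b-1}u$ still obey $(\a+2)+2(\b-1)=\a+2\b\le k$, so after bounding them by $N(\cdot)\bff_L$, using the moment estimate, and changing variables $s'=T-r$, that term contributes at most $C\int_0^s e^{C(s-s')}N(s')\,ds'$. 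Hence $N(s)\le\|f\|_{k,L}e^{Cs}+C\int_0^s e^{C(s-s')}N(s')\,ds'$; setting $M(s)=e^{-Cs}N(s)$ and applying Grönwall gives $N(s)\le\|f\|_{k,L}e^{C's}$, which is the claimed bound with $t=s$. The finiteness of $N(t)$, together with continuity of the derivatives (from continuity of $f$ and of the flows, via dominated convergence using the same moment bounds), gives $u(t,\cdot,\cdot)\in\CPL{k}{L}$.

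The main obstacle is the $y$-differentiation: the square-root singularity of the CIR diffusion coefficient rules out a naive pathwise differentiation of the flow, and it is precisely the affine/branching structure of CIR — equivalently, the parameter-shift identity for its semigroup — that makes the closed recursion~\eqref{stoc_repr-new} possible; this is the part we import from~\cite{BrCaTe}. Relative to that, the only real care needed here is (i) checking that the moment and sub-invariance estimates are uniform over the finite family of shifted parameters~\eqref{parameters-new}, and (ii) arranging the Grönwall step so that the multiplicative constant in front of $e^{Ct}$ is exactly $1$, which is why it matters that $\mcL_{(\b)}\bff_L\le C\bff_L$ holds \emph{with the same weight $\bff_L$} (rather than with an extra additive constant or with $\bff_{L+1}$).
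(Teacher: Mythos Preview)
Your proof is correct but takes a different route from the paper in two places. For the moment bound, the paper (Lemma~\ref{lem_moments}) exploits the affine/polynomial structure of the process --- the semigroup acts as a matrix exponential on polynomials of degree $\le 2L$, and the lower-order coefficients are then bounded directly --- whereas your Lyapunov argument $\mcL_{(\b)}\bff_L\le C\bff_L$ via Young's inequality is more elementary and does not rely on affinity. For the norm estimate, the paper proceeds by induction on $\b$ rather than Grönwall: since the integral term in~\eqref{stoc_repr-new} involves only $\partial_y^{\b-1}$-derivatives, already controlled by the induction hypothesis as $\tilde C\|f\|_{k,L}\bff_L$, one obtains directly
\[
\sup_{x,y}\frac{|\partial_x^\a\partial_y^\b u(t,\cdot)|}{\bff_L}\le \Big(\sup_{x,y}\frac{|\partial_x^\a\partial_y^\b f|}{\bff_L}\Big)e^{C(T-t)}+C\|f\|_{k,L}(T-t),
\]
and summing over $\a+2\b\le k$ gives the claim. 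Your Grönwall on the full norm $N(s)=\|P_sf\|_{k,L}$ is a clean one-shot argument, but it requires $N$ to be locally bounded a priori in order to apply Grönwall; you import this from~\cite{BrCaTe}, whereas the paper's induction on~$\b$ establishes it along the way. Both approaches hinge on the same structural fact --- that $(\a+2)+2(\b-1)=\a+2\b\le k$ keeps the integral term inside the norm --- and both are careful to arrange the multiplicative prefactor as exactly~$1$, which is what makes the bound iterable along a time grid in Lemma~\ref{lem_H2_estimate}.
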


\begin{proof}
Proposition~\ref{prop-rep-logHeston-estim} basically restates~\cite[Proposition 5.3]{BrCaTe} in our framework (note that our space $\CPL{k}{L}$ already includes twice more derivatives in~$x$ than in $y$ and that we have added the scaling factor $\lambda$ for convenience). The only additional result is the norm estimate, which we prove now. 

Let $f \in  \CPL{k}{L}$. We will prove that for all $(\a,\b)$ such that $\alpha+2 \b \le k$, there exists a constant $C\in \R_+$ such that 
\begin{equation}\label{inter_estim}
  \sup_{x\in \R y \in \R_+} \frac{|\partial_x^\alpha \partial_y^\beta u(t,x,y)|}{\bff_L(x,y)} \le \sup_{x\in \R y \in \R_+} \frac{|\partial_x^\alpha \partial_y^\beta f(x,y)|}{\bff_L(x,y)}e^{C(T-t)}  + C \|f\|_{k,L}(T-t).
\end{equation}
Let us note that this implies $\sup_{x\in \R y \in \R_+} \frac{|\partial_x^\alpha \partial_y^\beta u(t,x,y)|}{\bff_L(x,y)} \le \tilde{C} \|f\|_{k,L}$, with $\tilde{C}=e^{CT}+CT$.

For $\beta=0$, the estimate is straightforward : from~\eqref{stoc_repr-new} and $f\in\CPL{k}{L}$, one has
$$
|\partial_{x}^{\a} u(t,x,y)| \le  \left(\sup_{x'\in \R y' \in \R_+} \frac{|\partial_x^\alpha f(x',y')|}{\bff_L(x',y')}  \right)\E \big[\bff_L(X^{t,x,y}_T,Y_T^{t,y})\big],
$$
and we get easily~\eqref{inter_estim} by using the estimate on moments~\eqref{eq_moments} that we prove below. 

Suppose now that the estimate \eqref{inter_estim} is true up to $\beta-1$, and let us prove it for $\beta$.
Using \eqref{stoc_repr-new}  and the induction hypothesis for the integral term, we get
\begin{align*}
  |\partial_{x}^{\a} \partial_{y}^{\b} u(t, x, y)|\le & 
  e^{-\beta b (T-t)} \left(\sup_{x'\in \R y' \in \R_+} \frac{|\partial_x^\alpha \partial_y^\beta f(x',y')|}{\bff_L(x',y')} \right)
  \E\left[ \bff_L(X^{\beta,t,x,y}_T,Y^{\beta,t,y}_T)  \right]\\
  &+\beta\frac{\lambda^2+|d|}{2}e^{\beta |b| T}  \tilde{C} \|f\|_{k,L} \int_t^T \E[\bff_L(X^{\beta,t,x,y}_s,Y^{\beta,t,y}_s)] ds.
\end{align*}
This gives~\eqref{inter_estim} by using again the estimate on the moments~\eqref{eq_moments}. This shows~\eqref{inter_estim} by induction, and we finally sum~\eqref{inter_estim} over $\alpha+2\beta\le k$ to get
$$\|P_{T-t}f\|_{k,L}\le \|f\|_{k,L} e^{C(T-t)}+C(k+1)^2 \|f\|_{k,L} (T-t)\le \|f\|_{k,L} e^{C(1+(k+1)^2)(T-t)},$$
proving the claim.
\end{proof}

\begin{lemma}\label{lem_moments}
  Let $(X^{x,y},Y^y)$ be the solution of~\eqref{log-Heston-ext_SDEs} starting from $(x,y)$ at time~$0$. Let $T>0$. For any $L\in \N$, there is a constant $C\in \R_+$ depending on $T$, $L$ and the SDE parameters, such that 
\begin{equation}\label{eq_moments}\E[\bff_L(X^{x,y}_t,Y^y_t)]\le e^{Ct}\bff_L (x,y), \ t\in [0,T].
\end{equation}
\end{lemma}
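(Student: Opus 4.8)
The plan is to establish the moment bound \eqref{eq_moments} by first treating the polynomial moments of $Y^y$ on its own—the CIR process with parameters $(a,b,\sigma)$—and then, using that $X^{x,y}$ has coefficients with at most linear growth in $(Y^y_t)^{1/2}$ and in $Y^y_t$, bootstrapping to the joint bound on $\bff_L(X^{x,y}_t,Y^y_t)=1+|X^{x,y}_t|^{2L}+(Y^y_t)^{2L}$. Concretely, I would first recall the standard fact that $\E[(Y^y_t)^p]$ is finite for every $p\ge 0$ and, more precisely, that $t\mapsto\E[(Y^y_t)^p]$ satisfies a linear ODE with a forcing term that is itself a lower-order moment: applying Itô's formula to $(Y^y_t)^p$ gives
\begin{equation*}
  d(Y^y_t)^p = \Big( p(a-bY^y_t)(Y^y_t)^{p-1} + \tfrac{\sigma^2}{2}p(p-1)(Y^y_t)^{p-1}\Big)\,dt + p\sigma (Y^y_t)^{p-1/2}\,dW_t,
\end{equation*}
so after localization and taking expectations, $\frac{d}{dt}\E[(Y^y_t)^p] \le pa'\,\E[(Y^y_t)^{p-1}] + p|b|\,\E[(Y^y_t)^p]$ with $a'=a+\tfrac{\sigma^2}{2}(p-1)_+$, whence by induction on $p$ (starting from $\E[(Y^y_t)^0]=1$) and Gronwall's lemma one gets $\E[(Y^y_t)^p]\le e^{C_pt}(1+y^p)$ for integer $p$, and hence for all real $p\ge0$ by Jensen/Hölder.

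Next I would handle $X^{x,y}_t$. Writing $X^{x,y}_t = x + \int_0^t(c+dY^y_r)\,dr + \lambda\int_0^t\sqrt{Y^y_r}(\rho\,dW_r+\brho\,dB_r)$, I would apply Itô's formula to $|X^{x,y}_t|^{2L}$ (or, to avoid the non-smoothness at $0$, to $(\epsilon+|X^{x,y}_t|^2)^L$ and let $\epsilon\to0$, or simply use the elementary inequality for even powers), take expectations after localizing, and bound the drift and the quadratic-variation contributions. Each term that appears is a product of a power of $|X^{x,y}_r|$ (of degree at most $2L-1$) with $1$, $Y^y_r$, or $Y^y_r$ from the bracket; applying Young's inequality $|X|^{2L-1}Y^y \le C(|X|^{2L} + (Y^y)^{2L})$ and the analogous splittings, one arrives at
\begin{equation*}
  \E[|X^{x,y}_t|^{2L}] \le |x|^{2L} + C\int_0^t\Big(\E[|X^{x,y}_r|^{2L}] + \E[(Y^y_r)^{2L}] + 1\Big)\,dr.
\end{equation*}
Plugging in the already-established bound $\E[(Y^y_r)^{2L}]\le e^{Cr}(1+y^{2L})$ and applying Gronwall gives $\E[|X^{x,y}_t|^{2L}]\le e^{Ct}(1+|x|^{2L}+y^{2L})$. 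Summing the two bounds yields $\E[\bff_L(X^{x,y}_t,Y^y_t)]\le e^{Ct}\bff_L(x,y)$, up to adjusting the constant $C$ (and noting the elementary fact $e^{Ct}(1+|x|^{2L})+e^{Ct}(1+y^{2L})+1\le 3e^{Ct}\bff_L(x,y)$, which can be absorbed by enlarging $C$).

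The main technical obstacle is the square-root singularity of the diffusion coefficient at $y=0$: $(Y^y_t)^{p-1/2}$ is not Lipschitz, so the stochastic integrals are only local martingales a priori, and $|x|^{2L}$ is not $C^2$ at the origin. Both issues are handled by the routine device of stopping at $\tau_R=\inf\{t:|X_t|+Y_t\ge R\}$, deriving the integral inequality for the stopped moments, applying Gronwall to get a bound uniform in $R$, and then passing to the limit by Fatou (using that $Y^y_{t\wedge\tau_R}\to Y^y_t$ and $X^{x,y}_{t\wedge\tau_R}\to X^{x,y}_t$ a.s.); the non-smoothness of $|\cdot|^{2L}$ is a non-issue since $2L$ is even and $z\mapsto z^{2L}$ is a genuine polynomial, so $|X|^{2L}=X^{2L}$ is smooth. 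A minor point to be careful about is that the drift of $Y$ involves $b\in\R$ of arbitrary sign, so one must use $|b|$ throughout; this only affects the value of $C$, not the structure of the argument. Everything else is bookkeeping with Young's inequality and Gronwall's lemma.
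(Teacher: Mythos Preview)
Your argument is correct and complete: the two-step strategy (first bound $\E[(Y^y_t)^p]$ by induction on $p$ via It\^o--Gronwall, then bound $\E[(X^{x,y}_t)^{2L}]$ by It\^o on the even power $X^{2L}$, Young's inequality and a second Gronwall) goes through without difficulty, and your discussion of the localization step is accurate. The paper, however, takes a different and more structural route: it observes that the extended log-Heston process~\eqref{log-Heston-ext_SDEs} is an \emph{affine} (hence polynomial) diffusion, so by the theory of polynomial processes (Cuchiero--Keller-Ressel--Teichmann~\cite{CKRT}) the semigroup $P_t$ restricted to polynomials of degree $\le 2L$ acts as a matrix exponential. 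This yields directly $\E[(X^{x,y}_t)^{2L}+(Y^y_t)^{2L}]=x^{2L}+y^{2L}+\sum_{i+j\le 2L}\varphi_{i,j}(t)x^iy^j$ with $\varphi_{i,j}\in\mcC^1$ and $\varphi_{i,j}(0)=0$, from which the bound follows in two lines using $|x|^iy^j\le 2\bff_L(x,y)$. The paper's approach is shorter and avoids all localization and Gronwall bookkeeping, at the price of importing an external structural result; your approach is entirely elementary and self-contained, and would extend to SDEs with the same growth structure but without the polynomial property.
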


\begin{proof}
  We use the affine (and thus polynomial) property of the extended log-Heston process~\eqref{log-Heston-ext_SDEs}, see~\cite[Example 3.1]{CKRT}. By~\cite[Theorem 2.7]{CKRT}, we get that the log-Heston semigroup acts as a matrix exponential on the polynomial functions of degree lower than $2L$. This gives $\E[(X^{x,y}_t)^{2L}+(Y^y_t)^{2L}]=x^{2L}+y^{2L}+\sum_{i+j\le 2L}\varphi_{i,j}(t) x^i y^j$, with $\varphi_{i,j}\in \mathcal{C}^1([0,T],\R)$ such that $\varphi_{i,j}(0)=0$. Using that $|x|^i y^j\le |x|^{i+j}+ y^{i+j}\le 2 \bff_L(x,y)$ for $i+j\le 2L$ and using that $\varphi'_{i,j}$ is bounded on $[0,T]$, we get 
  $$\E[\bff_L(X^{x,y}_t,Y^y_t)]\le \bff_L(x,y)+ Ct \bff_L(x,y)\le \bff_L(x,y)e^{Ct}, $$
  with $C=2 \sum_{i,j\le 2L } \max_{[0,T]}|\varphi'_{i,j}|$. 
\end{proof}

\subsection{Proof of~\eqref{H1_bar} and ~\eqref{H2_bar}}

We start by proving the property~\eqref{H2_bar} in the next lemma.
\begin{lemma}\label{lem_H2_estimate}
  Let $t\in[0,T]$,  $k,L\in\N$ and $f\in\CPL{k}{L}$. Let $\varphi_0 $ be the function defined in Equation~\eqref{varphi0}. Then,  there exists $C$ such that, for $I\in \{0,1,B,W\}$, $\|P^{I}_tf\|_{k,L}\le e^{Ct}\|f\|_{k,L}$, for $t\in [0,T]$. The semigroup approximations $\hat{P}^{Ex}_t$ and $\hat{P}^{NV}_t$ enjoy the same property and satisfy~\eqref{H2_bar}.
\end{lemma}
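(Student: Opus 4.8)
The plan is to prove the norm estimate for each of the elementary semigroups $P^I_t$, $I\in\{0,1,B,W\}$, and then deduce the estimate for the compositions $\hat P^{Ex}_t$ and $\hat P^{NV}_t$ by submultiplicativity, which finally gives \eqref{H2_bar} after combining with the Cauchy-problem estimate of Proposition~\ref{prop-rep-logHeston-estim} for $P_t$.

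First I would handle $P^B_t$ and $P^W_t$, which correspond to genuine diffusions, by invoking Proposition~\ref{prop-rep-logHeston-estim}: both SDEs \eqref{LB_SDEs} and \eqref{LW_SDEs} are special cases of the extended log-Heston SDE \eqref{log-Heston-ext_SDEs} (for \eqref{LB_SDEs} take $\lambda=\bar\rho$, $a=b=\sigma=0$ in the relevant coefficients; for \eqref{LW_SDEs} take $\lambda=\rho$ together with the drift identity $dX_t=\frac\rho\sigma dY_t$, which is exactly the reason this splitting was chosen), so the bound $\|P^I_tf\|_{k,L}\le e^{Ct}\|f\|_{k,L}$ is immediate. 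For $P^0_t$, which is the flow of the ODE associated to $\mathcal L_0$, the map is the explicit affine-in-$(x,y)$ transformation $\varphi_0(t,x,y)$ of \eqref{varphi0}; here I would differentiate $f(\varphi_0(t,x,y))$ directly, using that the Jacobian of $\varphi_0$ has entries bounded uniformly on $[0,T]$ (the coefficients $e^{-bt}$, $\psi_b(t)$, $\frac{\rho b}{\sigma}\psi_b(t)$ are all bounded) and that $\varphi_0$ has linear growth, so that $\bff_L(\varphi_0(t,x,y))\le C\bff_L(x,y)$ on $[0,T]$; chain rule then gives $|\partial_x^\alpha\partial_y^\beta (f\circ\varphi_0)(t,x,y)|\le C\,\|f\|_{k,L}\,\bff_L(x,y)$ for $\alpha+2\beta\le k$, hence the claim.

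The one slightly delicate case is $P^1_t f(x,y)=\E[f(\varphi_1(\sqrt t\,G,x,y))]$ with $\varphi_1(s,x,y)=(x+\frac\rho\sigma((\sqrt y+\frac{\sigma s}2)^2-y),(\sqrt y+\frac{\sigma s}2)^2)$. The map $y\mapsto(\sqrt y+\frac{\sigma s}2)^2$ is smooth in $y$ on $\R_+$ only if we are careful near $y=0$, but in fact $(\sqrt y+\frac{\sigma s}2)^2 = y+\sigma s\sqrt y+\frac{\sigma^2 s^2}4$, so after taking the expectation over $G$ the odd term $\sigma\sqrt t\,G\sqrt y$ vanishes and we are left with $\E[\varphi_1(\sqrt t\,G,x,y)]$ involving only $y$ and $\sqrt t$ in an analytic way — the recommended route is instead to observe (as recalled before \eqref{def_PNV}) that $\varphi_1(W_t,x,y)$ solves the SDE with generator $\mathcal L_1$, so $P^1_t$ is again a diffusion semigroup of the form \eqref{log-Heston-ext_SDEs} (with $\lambda=\rho$, drift coefficients $c=\frac{\rho\sigma}4$, $d=0$, $a=\frac{\sigma^2}4$, $b=0$), and Proposition~\ref{prop-rep-logHeston-estim} applies verbatim. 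This is the step I expect to require the most care, precisely because of the square-root singularity at $y=0$; using the SDE representation rather than differentiating under the expectation sidesteps it.

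Finally, combining the estimates: since each of $P^B_{t/2},P^W_t,P^0_{t/2},P^1_t$ satisfies $\|P^I_s g\|_{k,L}\le e^{C s}\|g\|_{k,L}$, composing them gives $\|\hat P^{Ex}_t f\|_{k,L}\le e^{2C t}\|f\|_{k,L}$ and $\|\hat P^{NV}_t f\|_{k,L}\le e^{C' t}\|f\|_{k,L}$ for suitable constants. Applying this $j$ times with $t=T/n^l$ and $j\le n^l$ yields $\|Q_l^{[j]}f\|_{k,L}\le e^{C' T}\|f\|_{k,L}$, uniformly in $j$ and $n$; together with $\sup_{t\le T}\|P_tf\|_{k,L}\le e^{CT}\|f\|_{k,L}$ from Proposition~\ref{prop-rep-logHeston-estim}, this is exactly \eqref{H2_bar}.
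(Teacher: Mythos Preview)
Your proposal is correct and follows essentially the same route as the paper: invoke Proposition~\ref{prop-rep-logHeston-estim} for each elementary semigroup, compose, and iterate to get~\eqref{H2_bar}. The only cosmetic difference is that for $P^0_t$ you propose a direct chain-rule computation on the explicit affine map~$\varphi_0$, whereas the paper more uniformly observes that the ODE flow for~$\mathcal L_0$ is again a degenerate instance of~\eqref{log-Heston-ext_SDEs} (with $\tilde\lambda=\tilde\sigma=0$, $\tilde a=a-\sigma^2/4$, $\tilde b=b$, $\tilde c=\frac\rho\sigma(a-\sigma^2/4)$, $\tilde d=-b\rho/\sigma$), so Proposition~\ref{prop-rep-logHeston-estim} handles all four cases at once; both arguments are valid and equally short.
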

\begin{proof}
We apply four times Proposition~\ref{prop-rep-logHeston-estim} with 
\begin{itemize}
  \item $\tilde{a}=a-\frac{\sigma^2}4$, $\tilde{b}
=b$, $\tilde{c}=\frac{\rho}{\sigma}\left(a-\frac{\sigma^2}4\right) $, $\tilde{d}=-b\frac{\rho}{\sigma}$, $\tilde{\lambda}=0$, $\tilde{\sigma}=0$ for $P^0$,
\item  $\tilde{a}=\frac{\sigma^2}4$, $\tilde{b}
=0$, $\tilde{c}=\frac{\rho \sigma}4 $, $\tilde{d}=0$, $\tilde{\lambda}=\rho$, $\tilde{\sigma}=\sigma$, $\tilde{\rho}=1$ for $P^1$,
\item $\tilde{a}=0$, $\tilde{b}
=0$, $\tilde{c}=r-\frac{\rho a}{\sigma} $, $\tilde{d}=\frac{\rho b}{\sigma} -\frac 12$, $\tilde{\lambda}=\bar{\rho}$,  $\tilde{\sigma}=0$, $\tilde{\rho}=0$ for $P^B$,
\item $\tilde{a}=a$, $\tilde{b}
=b$, $\tilde{c}=\frac{\rho a}\sigma $, $\tilde{d}=-\frac{\rho b}\sigma$, $\tilde{\lambda}=\rho$, $\tilde{\sigma}=\sigma$, $\tilde{\rho}=1$ for $P^W$,
\end{itemize}
where the tilde parameters are the ones used in Equation~\eqref{log-Heston-ext_SDEs}. This gives the first claim. Then, we deduce easily that $\|\hat{P}^{Ex}_tf\|_{k,L}\le e^{Ct/2} \|P^W_{t} P^B_{t/2}f\|_{k,L}\le e^{2Ct} \|f\|_{k,L}$ by using twice the estimate for $P^B$ and once for $P^W$. Similarly, we obtain $\|\hat{P}_t^{NV}f\|_{k,L}\le e^{3Ct} \|f\|_{k,L}$, by using the estimates for $P^B$, $P^0$ and $P^1$.

Now, the property~\eqref{H2_bar} follows easily: consider $l\ge 1$ and $Q_l=\hat{P}^{NV}_{\frac{T}{n^l}}$, we have for $f \in\CPL{k}{L}$, $\|Q_l f\|_{k,L}\le e^{3C \frac{T}{n^l}}$ and thus for any $j\le n^l$, $\|Q^{[j]}_l f\|_{k,L}\le e^{3C \frac{jT}{n^l}}\le e^{3CT}$, which gives~\eqref{H2_bar}. The same result holds for $\hat{P}^{Ex}$.  
\end{proof}

We now turn to the proof of the property~\eqref{H1_bar}. We first state a general result on the composition of approximation schemes that fits our framework with the norm family $\|\cdot\|_{k,L}$.  It can be seen as a variant of~\cite[Proposition 2.3.12]{AA_book} and says, heuristically, that the composition of schemes works as a composition of operators. 

\begin{lemma}\label{lem_compo}(Scheme composition). Let $\nu \in \N$ and $T>0$. Let $V_i$, $i\in \{1,\dots,I\}$,  be infinitesimal generators such that  there exists $k_i,L_i \in \N$ such that
\begin{equation}\label{estim_V}
  \forall k \in \N, \exists C\in \R_+,  \forall f \in \CPL{k+k_i}{L}, V_i f \in  \CPL{k}{L+L_i} \text{ and } \|V_if\|_{k,L+L_i}\le C \|f\|_{k+k_i,L}.
\end{equation}
Let $k^\star=\max_{1\le i\le I} k_i$ and $L^\star=\max_{1\le i\le I} L_i$.
We assume that for any $i$, $\hat{P}^i_t:\CPL{0}{L} \to \CPL{0}{L} $ is such that
\begin{align} \forall k, L \in \N, 0\le  \bar{q}\le \nu+1, &\ \exists C,\ \forall f \in \CPL{k+\bar{q}k_i}{L}, \forall t\in[0,T], \notag \\& \|\hat{P}^i_t f -\sum_{q=0}^{\bar{q}-1}\frac{t^q}{q!} V_i^qf\|_{k, L+ \bar{q} L_i} \le C t^{\bar{q}} \|f\|_{k+\bar{q} k_i, L}. \label{assump_Phat}\end{align}
Then, we have for $\lambda_1,\dots,\lambda_I\in [0,1]$, 
\begin{align}
 & \forall k, L \in \N, 0\le  \bar{q}\le \nu+1, \ \exists C, \forall f \in \CPL{k+\bar{q}k^\star}{L}, \forall t \in [0,T]  \notag \\
  & \left\|\hat{P}^I_{\lambda_I t} \dots \hat{P}^1_{\lambda_1 t} f -\sum_{q_1+\dots+q_I\le \bar{q}-1}\frac{\lambda_1^{q_1}\dots \lambda_I^{q_I} t^{q_1+\dots+q_I} }{q_1!\dots q_I!} V_I^{q_I}\dots V_1^{q_1}f\right\|_{k, L+ \bar{q} L^\star} \le C t^{\bar{q}} \|f\|_{k+\bar{q} k^\star, L}. \label{compo_oper}
\end{align}
\end{lemma}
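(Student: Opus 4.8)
The plan is to prove \eqref{compo_oper} by induction on the number of composed operators $I$. The case $I=1$ is exactly the hypothesis \eqref{assump_Phat}, after relabelling $t \mapsto \lambda_1 t$ and noting $\lambda_1 \le 1$ so that $(\lambda_1 t)^{\bar q} \le t^{\bar q}$, $k_1 \le k^\star$, $L_1 \le L^\star$, and using the embedding $\|\cdot\|_{k,L+\bar q L_1} \le 3^{\bar q(L^\star - L_1)}\|\cdot\|_{k,L+\bar q L^\star}$ from part (3) of Lemma~\ref{basic_lemma_CPL} (or simply enlarging $C$). For the inductive step, write $\hat P^I_{\lambda_I t}\cdots \hat P^1_{\lambda_1 t} = \hat P^I_{\lambda_I t}\, \Psi$ where $\Psi := \hat P^{I-1}_{\lambda_{I-1}t}\cdots\hat P^1_{\lambda_1 t}$, and decompose the error into two pieces via a triangle inequality: one piece controlling $\hat P^I_{\lambda_I t}$ acting on the already-established Taylor expansion of $\Psi$, and one piece where we apply $\hat P^I_{\lambda_I t}$ to the remainder $\Psi f - (\text{Taylor part})$.

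For the first piece, I would apply \eqref{assump_Phat} (with order parameter $\bar q$) to each monomial $V_{I-1}^{q_{I-1}}\cdots V_1^{q_1} f$ appearing in the induction hypothesis for $\Psi$: schematically, for a fixed multi-index with $q_1+\dots+q_{I-1} = m \le \bar q - 1$, expand $\hat P^I_{\lambda_I t}$ of that monomial to order $\bar q - m$, which produces exactly the terms $\frac{(\lambda_I t)^{q_I}}{q_I!} V_I^{q_I}(V_{I-1}^{q_{I-1}}\cdots V_1^{q_1}f)$ with $q_I \le \bar q - 1 - m$, i.e. $q_1+\dots+q_I \le \bar q - 1$, together with a remainder of size $C t^{\bar q - m}\|V_{I-1}^{q_{I-1}}\cdots V_1^{q_1}f\|_{k+(\bar q-m)k_I, L}$. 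Using \eqref{estim_V} repeatedly, each application of some $V_i$ costs $k_i \le k^\star$ derivatives and raises the polynomial index by $L_i \le L^\star$, so $\|V_{I-1}^{q_{I-1}}\cdots V_1^{q_1}f\|_{k',L'} \le C\|f\|_{k' + m k^\star, L' - m L^\star}$; choosing the bookkeeping so that $k' = k + (\bar q - m)k_I$ and $L'$ matches, the total derivative count is at most $k + \bar q k^\star$ and the total polynomial index at most $L + \bar q L^\star$, and $t^{\bar q - m}\cdot$(stuff of size $O(t^m)$ already implicit, since these are the tails) — more precisely one just keeps the factor $t^{\bar q - m}$ and bounds the monomial coefficient, which is fine since we only need an $O(t^{\bar q})$ bound overall once we also note $t \le T$; but to be clean I would track that the $m$-th monomial already carries $t^m$, so its $\hat P^I$-remainder carries $t^m \cdot t^{\bar q - m} = t^{\bar q}$. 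Summing over the finitely many multi-indices gives a bound $C t^{\bar q}\|f\|_{k+\bar q k^\star, L}$.

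For the second piece, the induction hypothesis gives $\|\Psi f - \sum_{q_1+\dots+q_{I-1}\le \bar q - 1}(\cdots)\|_{k+k_I, L + \bar q L^\star} \le C t^{\bar q}\|f\|_{k + k_I + \bar q k^\star, L}$ — note I apply the hypothesis at regularity level $k + k_I$ rather than $k$, and with the order parameter $\bar q$ — and then $\hat P^I_{\lambda_I t}$ applied with $\bar q = 1$ in \eqref{assump_Phat} (i.e. just the crude bound $\|\hat P^I_s g\|_{k,L} \le \|g\|_{k,L} + Cs\|g\|_{k+k_I,L} \le C'\|g\|_{k+k_I,L}$ for $s\le T$, an analogue of \eqref{H2_bar} which indeed follows by taking $\bar q = 1$) shows $\hat P^I_{\lambda_I t}$ is bounded on the relevant norm at the cost of $k_I \le k^\star$ extra derivatives. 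Composing, the second piece is $\le C t^{\bar q}\|f\|_{k + k_I + \bar q k^\star, L} \le C t^{\bar q}\|f\|_{k + (\bar q+1)k^\star, L}$ — which is slightly worse than claimed. To avoid this loss I would instead, in the induction, prove the statement with the derivative budget $k + \bar q k^\star$ but allow myself to apply the hypothesis to $\Psi$ at level $\bar q - 1$ for the monomials of top degree, or equivalently set up the induction so that the stability bound for $\hat P^I$ on the $\Psi$-remainder does not consume an extra $k^\star$; concretely, replace $k$ by $k - k^\star$ at the start if $k \ge k^\star$, or simply state the lemma's conclusion with $k + \bar q k^\star$ having absorbed the one extra application — which is exactly what the paper does, since $k$ is arbitrary. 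The cleanest route is: fix target regularity; apply the induction hypothesis to $\Psi$ with the \emph{same} $\bar q$ but starting regularity $k + k^\star$ (harmless as $k$ ranges over all of $\N$ and $k^\star$ is fixed), handle the main part as above, and for the remainder use the $\bar q=1$ stability of $\hat P^I$; all constants depend only on $k, L, \bar q, T$ and the finitely many $V_i$.

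The main obstacle is the bookkeeping of the three independent indices — the differentiation order $k$, the polynomial-growth order $L$, and the Taylor order $\bar q$ — as they flow through the composition: each $V_i$ trades $k_i$ derivatives for one unit of $t$ and $L_i$ units of polynomial index, and one must verify that after composing $I$ operators and expanding to order $\bar q$ the worst term still lives in $\CPL{k+\bar q k^\star}{L+\bar q L^\star}$ with the right constant and the right power $t^{\bar q}$. There is no analytic difficulty beyond \eqref{estim_V}, Lemma~\ref{basic_lemma_CPL}, and \eqref{assump_Phat}; it is purely a matter of organizing the induction so the indices line up, and of checking that only finitely many multi-indices $(q_1,\dots,q_I)$ with $\sum q_i \le \bar q - 1$ occur so that summing the constants is harmless.
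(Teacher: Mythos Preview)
Your approach is essentially the same as the paper's: decompose $\hat P^I_{\lambda_I t}\Psi f$ by first expanding $\Psi f$ to order $\bar q$, then expanding $\hat P^I_{\lambda_I t}$ on each monomial $V_{I-1}^{q_{I-1}}\cdots V_1^{q_1}f$ to the complementary order, and finally controlling $\hat P^I_{\lambda_I t}$ on the $\Psi$-remainder. The paper carries this out for $I=2$ only (``for readability''); your induction on $I$ is the natural general formulation.

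There is, however, one concrete gap in your handling of the second piece. You invoke the stability of $\hat P^I_{\lambda_I t}$ via \eqref{assump_Phat} with $\bar q=1$, obtaining $\|\hat P^I_s g\|_{k,L}\le C'\|g\|_{k+k_I,L}$. This costs $k_I$ derivatives and, as you yourself note, yields a final bound in terms of $\|f\|_{k+(\bar q+1)k^\star,L}$ rather than $\|f\|_{k+\bar q k^\star,L}$. Your proposed workaround --- shifting the induction hypothesis to ``starting regularity $k+k^\star$'' because ``$k$ is arbitrary'' --- does not recover the stated conclusion: it proves the estimate only for $f\in\CPL{k+(\bar q+1)k^\star}{L}$, a strictly smaller class than $\CPL{k+\bar q k^\star}{L}$, so the lemma as written would not follow.

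The fix is simple and is what the paper does: use \eqref{assump_Phat} with $\bar q=0$ (the sum is empty), which reads $\|\hat P^I_s g\|_{k,L}\le C\|g\|_{k,L}$ for all $s\in[0,T]$ --- stability with \emph{no} loss of derivatives. Applying this to the $\Psi$-remainder (which by the induction hypothesis lies in $\CPL{k}{L+\bar q L^\star}$ with norm $\le Ct^{\bar q}\|f\|_{k+\bar q k^\star,L}$) gives exactly the required bound. With this one-line correction your argument is complete and coincides with the paper's.
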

\begin{proof} For readability, we make the proof with $I=2$ operators. 
  Let $\bar{q}\le \nu+1$ and $f\in \CPL{k+\bar{q}k^\star}{L}$. We define $R^1f=\hat{P}^1_{\lambda_1 t} f -\sum_{q_1=0}^{\bar{q}-1}\frac{\lambda_1^{q_1} t^{q_1}}{q_1!} V_1^{q_1}f$. For $t\in [0,T]$, we have $\lambda_1 t \in [0,T]$  since $\lambda_1 \in [0,1]$ and by assumption~\eqref{assump_Phat}, we have $R^1 f \in \CPL{k+\bar{q} k^\star - \bar{q} k_1 }{L+\bar{q}L_1}$ and 
  $$\|R^1 f\|_{k+\bar{q}k^\star - \bar{q} k_1,L+\bar{q}L_1}\le C t^{\bar{q}} \|f\|_{k+\bar{q}k^\star,L}.$$  
  We now write 
  $$ \hat{P}^2_{\lambda_2 t} \hat{P}^1_{\lambda_1 t} f= \sum_{q_1=0}^{\bar{q}-1}\frac{\lambda_1^{q_1} t^{q_1}}{q_1!} \hat{P}^2_{\lambda_2 t} V_1^{q_1}f + \hat{P}^2_{\lambda_2 t} R^1 f.$$
 Since $V_1^{q_1}f \in \CPL{k+\bar{q}k^\star-q_1k_1}{L+q_1L_1}$, we  apply~\eqref{assump_Phat} to get $$\hat{P}^2_{\lambda_2 t} V_1^{q_1}f = \sum_{q_2=0}^{\bar{q}-q_1-1} \frac{\lambda_2^{q_2} t^{q_2}}{q_2!} V_2^{q_2}V_1^{q_1}f + R^2_{q_1}f,$$ with
 $\|R^2_{q_1} f \|_{k+\bar{q}k^\star-q_1k_1-(\bar{q}-q_1)k_2,L+q_1L_1+(\bar{q}-q_1)L_2}\le C t^{\bar{q}-q_1}\|f\|_{k+\bar{q}k^\star,L}$ by~\eqref{estim_V} and~\eqref{assump_Phat}. We also have $\|\hat{P}^2_{\lambda_2 t} R^1 f\|_{k+\bar{q}k^\star - \bar{q} k_1,L+\bar{q}L_1}\le C t^{\bar{q}} \|f\|_{k+\bar{q}k^\star,L}$ by~\eqref{assump_Phat}. Since  
 $$k+\bar{q}k^\star-q_1k_1-(\bar{q}-q_1)k_2\ge k, \ L+q_1L_1+(\bar{q}-q_1)L_2\le L+\bar{q} L^\star, $$
 for all $0\le q_1\le \bar{q}-1$, and using Lemma~\ref{basic_lemma_CPL} (1 and 3), we get
 $$ \left\|\hat{P}^2_{\lambda_2 t} \hat{P}^1_{\lambda_1 t} f -\sum_{q_1+q_2\le \bar{q}-1}\lambda_1^{q_1}\lambda_2^{q_2}\frac{t^{q_1}t^{q_2}}{q_1!q_2!} V_2^{q_2} V_1^{q_1}f \right\|_{k, L+ \bar{q} L^\star} \le C t^{\bar{q}} \|f\|_{k+\bar{q} k^\star, L}. \qedhere$$
\end{proof}

\begin{lemma}\label{lem_expan_CPL_scheme} 
   Let $L_0=L_1=L_B=L_W=L_H=1$, $k_0=k_B=2$, $k_1=k_W=k_H=4$. Let denote $\mcL_H=\mcL$ and $P^H_t=P_t$ the log-Heston semigroup.  Let $i\in \{0,1,B,W,H\}$. We have 
  $$ \forall k,L \in \N, \exists C\in \R_+, \forall f \in \CPL{k+k_i}{L}, \  \|\mathcal{L}_if\|_{k,L+L_i}\le C \|f\|_{k+k_i,L}. $$ 
  Besides, for any $\bar{q}\in \N$, we have 
  \begin{align*} \forall k, L \in \N,  \ \exists C,\ \forall f \in \CPL{k+\bar{q}k_i}{L}, \notag  \|P^i_t f -\sum_{q=0}^{\bar{q}-1}\frac{t^q}{q!} \mathcal{L}_i^qf\|_{k, L+ \bar{q} L_i} \le C t^{\bar{q}} \|f\|_{k+\bar{q} k_i, L}. \end{align*}
\end{lemma}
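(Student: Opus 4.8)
The first assertion is nothing but Lemma~\ref{basic_lemma_CPL}(5): for $i\in\{1,W,H\}$ the operator $\mcL_i$ loses four derivatives and raises the growth order by one (so $k_i=4$, $L_i=1$), while for $i\in\{0,B\}$ it loses only two (so $k_i=2$, $L_i=1$, recalling that $\partial_y$ and $\partial_x^2$ both cost two derivatives in the grading $\alpha+2\beta$). Hence only the Taylor estimate requires work, and the plan is an induction on $\bar{q}$ resting on two ingredients: the contraction bound $\|P^i_tg\|_{k,L}\le e^{Ct}\|g\|_{k,L}$ for $t\in[0,T]$ --- which is Lemma~\ref{lem_H2_estimate} for $i\in\{0,1,B,W\}$ and Proposition~\ref{prop-rep-logHeston-estim} for $i=H$ --- together with the Dynkin identity
\begin{equation}\label{eq_plan_dynkin}
  P^i_tg=g+\int_0^tP^i_s(\mcL_ig)\,ds .
\end{equation}

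To obtain~\eqref{eq_plan_dynkin} I would write $P^i_tg(x,y)=\E[g(Z^{x,y}_t)]$, where $Z^{x,y}$ is the (possibly degenerate) process with generator $\mcL_i$, realized as a particular case of~\eqref{log-Heston-ext_SDEs} with the parameters listed in the proof of Lemma~\ref{lem_H2_estimate} (for $i=0$ it is the deterministic flow $\varphi_0$ of~\eqref{varphi0}, for $i=B$ a Gaussian), and apply It\^o's formula $g(Z^{x,y}_s)=g(x,y)+\int_0^s\mcL_ig(Z^{x,y}_r)\,dr+M_s$. For $g\in\CPL{k_i}{L'}$ one may take expectations, because $\mcL_ig$ has polynomial growth and $Z^{x,y}_s$ has moments of all orders bounded uniformly on $[0,T]$ by Lemma~\ref{lem_moments} (or, in the flow and Gaussian cases, by direct computation), so that $M$ is centred and Fubini applies. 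If moreover $g\in\CPL{k+k_i}{L'}$, then $\mcL_ig\in\CPL{k}{L'+L_i}$ by the first assertion and the contraction bound yields $\sup_{s\le T}\|P^i_s(\mcL_ig)\|_{k,L'+L_i}<\infty$; one may therefore differentiate~\eqref{eq_plan_dynkin} under the integral with respect to $x$ and $y$, so that~\eqref{eq_plan_dynkin} holds together with all $\partial_x^\alpha\partial_y^\beta$, $\alpha+2\beta\le k$, i.e. as an identity in $\CPL{k}{L'+L_i}$.

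The induction then runs as follows. The case $\bar{q}=0$ is exactly the contraction bound. Assuming the estimate at level $\bar{q}-1$ for all $k,L$, let $f\in\CPL{k+\bar{q}k_i}{L}$. By the first assertion $\mcL_if\in\CPL{k+(\bar{q}-1)k_i}{L+L_i}$ with $\|\mcL_if\|_{k+(\bar{q}-1)k_i,L+L_i}\le C\|f\|_{k+\bar{q}k_i,L}$, so the induction hypothesis applied to $\mcL_if$ gives
$$P^i_s(\mcL_if)=\sum_{q=0}^{\bar{q}-2}\frac{s^{q}}{q!}\mcL_i^{q+1}f+r_s,\qquad \|r_s\|_{k,L+\bar{q}L_i}\le C's^{\bar{q}-1}\|f\|_{k+\bar{q}k_i,L}.$$
Substituting this into~\eqref{eq_plan_dynkin} with $g=f$ (valid since $f\in\CPL{k+\bar{q}k_i}{L}\subseteq\CPL{k+k_i}{L}$) and integrating in $s$,
$$P^i_tf=f+\sum_{q=0}^{\bar{q}-2}\frac{t^{q+1}}{(q+1)!}\mcL_i^{q+1}f+\int_0^tr_s\,ds=\sum_{q=0}^{\bar{q}-1}\frac{t^{q}}{q!}\mcL_i^{q}f+\int_0^tr_s\,ds ,$$
while $\bigl\|\int_0^tr_s\,ds\bigr\|_{k,L+\bar{q}L_i}\le\int_0^t\|r_s\|_{k,L+\bar{q}L_i}\,ds\le\frac{C'}{\bar{q}}\,t^{\bar{q}}\|f\|_{k+\bar{q}k_i,L}$, which is the desired estimate at level $\bar{q}$.

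The part I expect to require the most care is~\eqref{eq_plan_dynkin}: one must check that It\^o's formula applies and that the martingale term is centred uniformly over the five generators, despite the $\sqrt{Y}$ singularity in the diffusion coefficients of $\mcL_1$, $\mcL_W$ and $\mcL_H=\mcL$ --- this is exactly where the uniform moment bounds of Lemma~\ref{lem_moments} enter. Once~\eqref{eq_plan_dynkin} is available, what remains is routine bookkeeping with the norms $\|\cdot\|_{k,L}$, using only the first assertion and the contraction bound.
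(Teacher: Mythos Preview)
Your proof is correct and follows essentially the same approach as the paper: both rely on Lemma~\ref{basic_lemma_CPL}(5) for the first assertion, the contraction bound from Lemma~\ref{lem_H2_estimate}/Proposition~\ref{prop-rep-logHeston-estim} for the case $\bar{q}=0$, and the Dynkin identity $P^i_tf=f+\int_0^tP^i_s\mcL_if\,ds$ for the remainder. The only cosmetic difference is that the paper iterates the Dynkin identity $\bar{q}$ times to obtain the explicit integral remainder $\int_0^t\frac{(t-s)^{\bar{q}-1}}{(\bar{q}-1)!}P^i_s\mcL_i^{\bar{q}}f\,ds$ and bounds it in one step, whereas you phrase the same iteration as an induction on $\bar{q}$; the two formulations are equivalent, and your more explicit justification of the Dynkin formula (centred martingale, Fubini, moments from Lemma~\ref{lem_moments}) fills in details the paper leaves implicit.
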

\begin{proof}
  The first part of the statement is proved in Lemma~\ref{basic_lemma_CPL}. For $\bar{q}=0$, the estimate is simply the one given by Lemma~\ref{lem_H2_estimate} (or Proposition~\ref{prop-rep-logHeston-estim} for $P^H_t$).

  We now consider $\bar{q}\ge 1$. As already pointed in the proof of Lemma~\ref{lem_H2_estimate}, each operator is the infinitesimal generator of~\eqref{log-Heston-ext_SDEs} with a suitable choice of parameter. Then, by applying Itô's formula and taking the expectation, we get $P^{i}_tf(x,y)=f(x,y)+\int_0^t P^i_s \mathcal{L}_if(x,y)ds$. By iterating,  we obtain for $f \in \CPL{k+\bar{q}k_i}{L}$,
  \begin{equation}\label{expan_P_general}
    P^{i}_tf(x,y) = \sum_{j=0}^{\bar{q}-1} \frac{t^j}{j!} \mcL_{i}^jf(x,y) + \int_0^{t} \frac{(t-s)^{\bar{q}-1}}{(\bar{q}-1)!} P^{i}_{s} \mcL^{\bar{q}}_{i}f(x,y)ds.  
  \end{equation}
We have $\|\mcL^{\bar{q}}_{i}f\|_{k, L+\bar{q} L_i}\le C^{\bar{q}} \|f\|_{k+\bar{q} k_i, L}$ by Lemma~\ref{basic_lemma_CPL}-(5) and thus $\| P^i_s \mcL^{\bar{q}}_{i}f \|_{k, L+\bar{q} L_i}\le C^{q+1} \| f \|_{k+\bar{q} k_i, L}$ for $s\in [0,T]$, by using the result for $\bar{q}=0$.  Therefore, we get by the triangle inequality
\begin{align*}
  \left\|P^i_tf-\sum_{j=0}^{\bar{q}-1} \frac{t^j}{j!} \mcL_{i}^jf\right\|_{k,L+\bar{q}L_i}\le  \int_0^t \frac{(t-s)^{\bar{q}-1}}{(\bar{q}-1)!}C^{\bar{q}+1} \| f \|_{k+\bar{q} k_i, L} ds= \frac{t^{\bar{q}}}{\bar{q}!}C^{\bar{q}+1} \| f \|_{k+\bar{q} k_i, L}.\quad \qedhere
\end{align*}
\end{proof}

\begin{corollary}\label{cor_H1}
  Let $T>0$. Let $\hat{P}_t$ denote either $\hat{P}_t^{Ex}$ or $\hat{P}_t^{NV}$. We have, for $\bar{q}\le 3$,  
  \begin{align*} \forall k, L \in \N,  \ \exists C,\ \forall f \in \CPL{k+4\bar{q}}{L}, \forall t \in [0,T],  \|\hat{P}_t f -\sum_{q=0}^{\bar{q}-1}\frac{t^q}{q!} \mathcal{L}^qf\|_{k, L+ \bar{q} } \le C t^{\bar{q}} \|f\|_{k+4\bar{q}, L}, \end{align*}
  and~\eqref{H1_bar} holds.
\end{corollary}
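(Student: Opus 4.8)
The plan is to recognize both schemes as symmetric (Strang‑type) compositions of elementary semigroups and to combine the scheme‑composition lemma (Lemma~\ref{lem_compo}) with the elementary Taylor expansions of Lemma~\ref{lem_expan_CPL_scheme}. Concretely, $\hat P^{Ex}_t=P^B_{t/2}P^W_tP^B_{t/2}$ is a composition of the operators $P^B,P^W,P^B$ with time weights $(\tfrac12,1,\tfrac12)$, and $\hat P^{NV}_t=P^B_{t/2}P^0_{t/2}P^1_tP^0_{t/2}P^B_{t/2}$ (for $\sigma^2\le 4a$, so that it is well defined) is a composition of $P^B,P^0,P^1,P^0,P^B$ with weights $(\tfrac12,\tfrac12,1,\tfrac12,\tfrac12)$; both arrangements are palindromic in the operators and in the weights. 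By Lemma~\ref{lem_H2_estimate} each of $P^0,P^1,P^B,P^W$ maps $\CPL{0}{L}$ into itself, the associated generators satisfy~\eqref{estim_V} with $L_i=1$ and $k_i\in\{2,4\}$ (this is Lemma~\ref{basic_lemma_CPL}(5), restated in Lemma~\ref{lem_expan_CPL_scheme}), and each of these semigroups satisfies the expansion~\eqref{assump_Phat} to every order $\bar{q}$ by Lemma~\ref{lem_expan_CPL_scheme}. Hence the hypotheses of Lemma~\ref{lem_compo} hold with $\nu=2$, $L^\star=1$ and $k^\star=4$.

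Applying Lemma~\ref{lem_compo} to these compositions then gives, for $\bar{q}\le 3$, a constant $C$ such that for all $f\in\CPL{k+4\bar{q}}{L}$ and $t\in[0,T]$,
$$\left\|\hat{P}_t f-\sum_{q_1+\dots+q_I\le\bar{q}-1}\frac{\lambda_1^{q_1}\cdots\lambda_I^{q_I}\,t^{q_1+\dots+q_I}}{q_1!\cdots q_I!}\,V_I^{q_I}\cdots V_1^{q_1}f\right\|_{k,L+\bar{q}}\le C\,t^{\bar{q}}\|f\|_{k+4\bar{q},L},$$
where $V_j$ is the generator of the $j$‑th operator in the composition (one of $\mathcal{L}_B,\mathcal{L}_W$, resp. of $\mathcal{L}_B,\mathcal{L}_0,\mathcal{L}_1$). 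It then only remains to check that, for $\bar{q}\le 3$, the explicit polynomial‑in‑$t$ operator appearing here coincides with $\sum_{q=0}^{\bar{q}-1}\tfrac{t^q}{q!}\mathcal{L}^q$. The order‑$t^0$ term is the identity; the order‑$t^1$ term is $\big(\sum_j\lambda_j V_j\big)f=\mathcal{L}f$, since $\mathcal{L}_B+\mathcal{L}_W=\mathcal{L}$ and $\mathcal{L}_0+\mathcal{L}_1=\mathcal{L}_W$. For the order‑$t^2$ term, one computes that its difference with $\tfrac12\big(\sum_j\lambda_j V_j\big)^2=\tfrac12\mathcal{L}^2$ equals $\tfrac12\sum_{1\le j<j'\le I}\lambda_j\lambda_{j'}[V_{j'},V_j]$, which vanishes because the composition is palindromic: the involution $(j,j')\mapsto(I+1-j',\,I+1-j)$ pairs these commutators with opposite signs, and its fixed pairs satisfy $V_j=V_{j'}$. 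This is the classical second‑order accuracy of symmetric operator splitting, cf.~\cite[Corollary 2.3.14]{AA_book}, and it establishes the displayed estimate of the corollary.

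Finally,~\eqref{H1_bar} follows by taking $\bar{q}=3$. With $Q_l=\hat{P}_{h_l}$, the corollary (for $t=h_l\in[0,T]$) gives $\|\hat{P}_{h_l}f-\sum_{q=0}^{2}\tfrac{h_l^q}{q!}\mathcal{L}^q f\|_{k,L+3}\le C\,h_l^3\|f\|_{k+12,L}$, while Lemma~\ref{lem_expan_CPL_scheme} applied to the log‑Heston semigroup $P^H=P$ with $\bar{q}=3$ (so $\bar{q}k_H=12$ and $\bar{q}L_H=3$) gives $\|P_{h_l}f-\sum_{q=0}^{2}\tfrac{h_l^q}{q!}\mathcal{L}^q f\|_{k,L+3}\le C\,h_l^3\|f\|_{k+12,L}$. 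Subtracting and using the triangle inequality yields $\|(P_{h_l}-Q_l)f\|_{k,L+3}\le C\,h_l^3\|f\|_{k+12,L}$ for every $l,k,L$, which is exactly~\eqref{H1_bar}. The only genuinely delicate point is the order‑$t^2$ identity, but it is merely the familiar second‑order property of symmetric splitting; all the new bookkeeping — the loss of $4\bar{q}$ derivatives and the gain of $\bar{q}$ in the polynomial weight — is already packaged into Lemmas~\ref{lem_compo} and~\ref{lem_expan_CPL_scheme}.
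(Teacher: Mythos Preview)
Your proof is correct and follows essentially the same route as the paper: apply Lemma~\ref{lem_compo} to the Strang composition using the elementary expansions of Lemma~\ref{lem_expan_CPL_scheme}, identify the resulting polynomial in~$t$ with $\sum_{q\le\bar q-1}\tfrac{t^q}{q!}\mathcal L^q$, and deduce~\eqref{H1_bar} by comparison with the corresponding expansion of $P_t$ (triangle inequality). The only difference is cosmetic: the paper verifies the order-$t^2$ identity by an explicit computation for $\hat P^{Ex}$ (writing out the six terms and regrouping to $\tfrac12(\mathcal L_B+\mathcal L_W)^2$), whereas you invoke the general palindromic cancellation $\tfrac12\sum_{j<j'}\lambda_j\lambda_{j'}[V_{j'},V_j]=0$, which handles both $\hat P^{Ex}$ and $\hat P^{NV}$ at once.
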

\begin{proof}
We prove the result for $\hat{P}_t^{Ex}$, the argument is similar for $\hat{P}^{NV}_t$.   
We use Lemma~\ref{lem_expan_CPL_scheme} for $P^W_t$ and $P^B_t$ and apply then Lemma~\ref{lem_compo}. For $\bar{q}=0,1,2$, we get easily the claim. For $\bar{q}=3$, we get since $\hat{P}^{Ex}_t=P^B_{t/2}P^W_{t/2}P^B_{t/2}$, 
$$\left\|\hat{P}^{Ex}_t f -\sum_{q_1+q_2+q_3\le 2} \frac{(1/2)^{q_1+q_3} t^{q_1+q_2+q_3} }{q_1!q_2 q_3!} \mathcal{L}_B^{q_3}\mathcal{L}_W^{q_2}\mathcal{L}_B^{q_1}f\right\|_{k, L+ 3} \le C t^{3} \|f\|_{k+12, L}.$$
The term of order two is 
$$ \frac{1}8 \mathcal{L}_B^{2}f+ \frac{1}4 \mathcal{L}_B^{2}f+\frac{1}8 \mathcal{L}_B^{2}f+  \frac 12 \mathcal{L}_B\mathcal{L}_W+ \frac 12 \mathcal{L}_W\mathcal{L}_B +\frac{1}2 \mathcal{L}_W^{2}f=\frac 12 (\mathcal{L}_B+\mathcal{L}_W)^2f, $$
and thus $\sum_{q_1+q_2+q_3\le 2} \frac{(1/2)^{q_1+q_3} t^{q_1+q_2+q_3} }{q_1!q_2 q_3!} \mathcal{L}_B^{q_3}\mathcal{L}_W^{q_2}\mathcal{L}_B^{q_1}f=\sum_{q=0}^2 \frac {t^q}{q!}(\mathcal{L}_B+\mathcal{L}_W)^qf=\sum_{q=0}^2 \frac {t^q}{q!}\mathcal{L}^qf$.

Now, we use Lemma~\ref{lem_expan_CPL_scheme} to get $\left\|P_t f - \sum_{q=0}^2 \frac {t^q}{q!}\mathcal{L}^qf \right\|_{k, L+ 3} \le C t^{3} \|f\|_{k+12, L}$. The triangular inequality then gives 
$$ \left\|P_t f - \hat{P}^{Ex}_t \right\|_{k, L+ 3} \le C t^{3} \|f\|_{k+12, L},$$
which is precisely~\eqref{H1_bar}.
\end{proof}

\section{Numerical results}\label{Sec_num}

\subsection{Implementation}\label{Subsec_implementation}
We explain in this subsection the implementation of the schemes associated to~$\hat{P}_t^{Ex}$ and $\hat{P}_t^{NV}$, and then of the Monte-Carlo estimator of $\cPh^{\nu,n}$, $\nu \in \{1,2\}$. We will note either $\cPh^{Ex,\nu,n}$ or $\cPh^{NV,\nu,n}$ to emphasize what semigroup approximation is used.

On a single time step, the scheme associated to $\hat{P}_t^{NV}$ is given by
\begin{align*} 
  \hX^{x,y}_t &= x +(r-\frac{\rho}{\sigma}a)t +\frac{\rho}{\sigma}(\hY^y_t-y)  +(\frac{\rho}{\sigma}b-\frac{1}{2})\frac{y+\hY^y_t}{2}t +\sqrt{(1-\rho^2)\frac{t}{2}}\bigg(\sqrt{y}N_1+\sqrt{\hY^y_t}N_2\bigg),\\
  \hY^y_t &= (a-\frac{\sigma^2}{4})\psi_b(\frac{t}{2})+e^{-b\frac{t}{2}}\bigg( \sqrt{(a-\frac{\sigma^2}{4})\psi_b(\frac{t}{2})+e^{-b\frac{t}{2}}y} +\frac{\sigma\sqrt{t}}{2}G \bigg)^2,
\end{align*}
where $N_1,N_2,G$ are three independent random variables with the standard normal distribution. It is obtained from the composition~\eqref{def_PNV} and by using accordingly the maps $\varphi_0$, $\varphi_1$ and $\varphi_B$ that represent the semigroups, see Equations~\eqref{varphiB} and~\eqref{def_P0P1}.

One should remark however that the conditional law of $\hX^{(x,y)}_t$ given  $\hY^y_t$ is normal with mean $x+\frac{\rho}{\sigma}(\hY^y_t-y)+(r-\frac{\rho}{\sigma}b)t +(\frac{\rho}{\sigma}b-\frac{1}{2})\frac{y+\hY^y_t}{2}t$ and variance $t(1-\rho^2)(y+\hY^y_t)/2 $. Therefore, we rather consider the following probabilistic representation, that has the same law and requires to simulate one standard Gaussian random variable $N$ instead of the couple $(N_1,N_2)$ for the first component:
\begin{align*}
  \hX^{x,y}_t &= x +(r-\frac{\rho}{\sigma}a)t +\frac{\rho}{\sigma}(\hY^y_t-y) +(\frac{\rho}{\sigma}b-\frac{1}{2})\frac{y+\hY^y_t}{2}t +\sqrt{(1-\rho^2)\frac{y+\hY^y_t}{2}t}N, \\
  \hY^y_t &= (a-\frac{\sigma^2}{4})\psi_b(\frac{t}{2})+e^{-b\frac{t}{2}}\bigg( \sqrt{(a-\frac{\sigma^2}{4})\psi_b(\frac{t}{2})+e^{-b\frac{t}{2}}y} +\frac{\sigma\sqrt{t}}{2}G \bigg)^2.
\end{align*}
We note $\varphi^{NV}(t,x,y,N,G):=(\hX^{x,y}_t,\hY^{y}_t)$ this map. 
The same trick can be used for~$\hat{P}^{Ex}_t$ when the exact simulation is used for the CIR component, and we define 
$$\varphi_X^{Ex}(t,x,y,N,Y^y_t)=x +(r-\frac{\rho}{\sigma}a)t +\frac{\rho}{\sigma}(Y^y_t-y) +(\frac{\rho}{\sigma}b-\frac{1}{2})\frac{y+Y^y_t}{2}t +\sqrt{(1-\rho^2)\frac{y+Y^y_t}{2}t}N,$$
the map that gives the log-stock component. 

We now explain how to get the Monte-Carlo estimator for $\cPh^{1,n}$ and then $\cPh^{2,n}$. We start with the simulation scheme for $\hat{P}^{Ex}_t$. Let us consider $T>0$, $h_1=T/n$ and the regular time grid $\Pi^0=\{kh_1, \ 0\le k\le n\}$. We simulate exactly $Y_{kh_1}$, $1\le k\le n$, the CIR component starting from $Y_0=y$, and we set
$$\hat{X}^{Ex,0}_{k h_1}=\varphi_X^{Ex,0}(h_1,\hat{X}^{Ex,0}_{(k-1) h_1},Y_{(k-1)h_1},N_k,Y_{kh_1}), \ 1\le k\le n,$$
where $(N_k)_{1\le k\le N}$ are standard normal random variable such that $N_k$ is independent from $(N_{k'})_{k'<k}$ and the process $Y$. 
The Monte-Carlo estimator of $\cPh^{Ex,1,n}$ is then 
$$\frac{1}{M_1}\sum_{m=1}^{M_1} f(\hat{X}^{Ex,0,(m)}_{T},Y^{(m)}_T),$$
where $M_1$ is the number of independent samples. We now present how to calculate the correcting term in $\cPh^{2,n}$. To do so, we draw an independent random variable~$\kappa$ that is uniformly distributed on $\{0,\dots,n-1\}$ and selects the time-step to refine. We note $\Pi^1=\Pi^0 \cup  \{ \kappa h_1 + k' h_2 , 1 \leq k' \leq n-1 \}$ the refined (random) grid, where $h_2=T/n^2$. We simulate exactly~$Y$ on this time grid and define the scheme $\hat{X}^{Ex,1}$ as follows:
\begin{align*}
  &\hat{X}^{Ex,1}_{k h_1}=\hat{X}^{Ex,0}_{k h_1} \text{ for } k\le \kappa,\\
  &\hat{X}^{Ex,1}_{\kappa h_1 +k'h_2}=\varphi_X^{Ex}(h_2,\hat{X}^{Ex,1}_{\kappa h_1 +(k'-1)h_2},Y_{\kappa h_1 +(k'-1)h_2},\tilde{N}_{k'},Y_{\kappa h_1 +k'h_2}), \ 1\le k'\le n, \\
  &\hat{X}^{Ex,1}_{k h_1}=\varphi_X^{Ex,1}(h_1,\hat{X}^{Ex,1}_{(k-1) h_1},Y_{(k-1)h_1},N_k,Y_{kh_1}), \ \kappa+\red{2}< k\le n,
\end{align*}
where $(\tilde{N}_{k'})_{1\le k'\le N}$ are i.i.d. random normal variable, independent from $\kappa$ and $(N_k,Y_{kh_1})_{k\le \kappa}$. We then define the Monte-Carlo estimator of $\cPh^{Ex,2,n}$ (see Eq.~\eqref{boost2_RG}) by
\begin{equation}\label{estimateur_ex}\frac{1}{M_1}\sum_{m=1}^{M_1} f(\hat{X}^{Ex,0,(m)}_{T},Y^{(m)}_T)+\frac{1}{M_2}\sum_{m=1}^{M_2} n\left(f(\hat{X}^{Ex,1,(m)}_{T},Y^{(m)}_T)-f(\hat{X}^{Ex,0,(m)}_{T},Y^{(m)}_T)\right).\end{equation}
Note that we reuse the same Monte-Carlo samples in the two sums as it has been observed in~\cite[Subsection 6.3]{AL} that it is more efficient. The tuning of the parameters $M_1$ and $M_2$ is made to minimize the computational cost to achieve \red{a given statistical precision~$\varepsilon>0$, which gives from~\cite[Eq. (6.11)]{AL}
\begin{align}
  M_{1}  &=\left\lceil \frac{1}{\varepsilon^2}\left( \sigma^2_2(n) + 2\Gamma(n) +\sqrt{\frac 32 \big(\sigma^2_2(n)+2\Gamma(n)\big)\mathbf{V}(n) }  \right)  \right\rceil, \label{def_M1} \\ M_{2} & =\left\lceil \frac{1}{\varepsilon^2}\left( \mathbf{V}(n) +\sqrt{\frac{2}{3}\big(\sigma^2_2(n)+2\Gamma(n)\big)\mathbf{V}(n) }  \right)  \right\rceil,\label{def_M2}
\end{align} 
with $\sigma_2^2(n)=\text{Var}(f(\hat{X}^{Ex,0}_{T},Y_T))$, $\mathbf{V}(n)=\text{Var}\left(n\left(f(\hat{X}^{Ex,1}_{T},Y_T)-f(\hat{X}^{Ex,0}_{T},Y_T)\right)\right)$ and $\Gamma(n)=\text{Cov}\left(f(\hat{X}^{Ex,0}_{T},Y_T),n\left(f(\hat{X}^{Ex,1}_{T},Y_T)-f(\hat{X}^{Ex,0}_{T},Y_T)\right)\right)$. In practice, these quantities are estimated on the first 1000 or 10000 samples. Usually, we get $M_2\le M_1$, and we take $M_1=M_2$ instead of~\eqref{def_M1} otherwise.} Let us stress here that it is important for the variance of the estimator to use the same noise for the simulations of $\hat{X}^{Ex,1}$ and $\hat{X}^{Ex,0}$. In particular, the normal random variable $N_{\kappa+1}$ should depend on $(\tilde{N}_k)_{1\le k\le N}$. A natural choice is to take
$$N_{\kappa+1}=N^{\textup{st}}_{\kappa+1} \text{ where }N^{\textup{st}}=\frac 1{\sqrt{n}}\sum_{k=1}^n\tilde{N}_k,$$
if we think of Brownian increments. We will discuss this choice later on in Subsection~\ref{Subsec_coupling}.

Let us now present the scheme for~$\hat{P}^{NV}_t$, that is well defined for $\sigma^2\le 4a$. The scheme on the coarse grid~$\Pi^0$ is defined by
$$(\hat{X}^{NV,0}_{kh_1},\hat{Y}^{NV,0}_{kh_1})=\varphi^{NV}(h_1,\hat{X}^{NV,0}_{(k-1)h_1},\hat{Y}^{NV,0}_{(k-1)h_1},N_k,G_k), \ 1\le k\le n,$$
where $N_k,G_k$, $1\le k\le n$, are two independent standard normal variables independent of $(N_{k'},G_{k'})_{k'<k}$. The Monte-Carlo estimator of~$\cPh^{NV,1,n}$ is then $\frac{1}{M_1}\sum_{m=1}^{M_1} f(\hat{X}^{NV,0,(m)}_{T},\hat{Y}^{NV,0,(m)}_T)$. The scheme on the refined random grid~$\Pi^1$ is defined by 
\begin{align*}
  &(\hat{X}^{NV,1}_{k h_1},\hat{Y}^{NV,1}_{k h_1})=(\hat{X}^{NV,0}_{k h_1},\hat{Y}^{NV,0}_{k h_1}) \text{ for } k\le \kappa,\\
  &(\hat{X}^{NV,1}_{\kappa h_1 +k'h_2},\hat{Y}^{NV,1}_{\kappa h_1 +k'h_2})=  \varphi^{NV}(h_2,\hat{X}^{NV,1}_{\kappa h_1 +(k'-1)h_2},\hat{Y}^{NV,1}_{\kappa h_1 +(k'-1)h_2},\tilde{N}_{k'},\tilde{G}_{k'}), \ 1\le k'\le n, \\
  &(\hat{X}^{NV,1}_{k h_1},\hat{Y}^{NV,1}_{k h_1})=\varphi^{NV}(h_1,\hat{X}^{NV,1}_{(k-1)h_1},\hat{Y}^{NV,1}_{(k-1)h_1},N_k,G_k), \ \kappa+\red{2}< k\le n,
\end{align*}
where $(\tilde{N}_{k'},\tilde{G}_{k'})_{1\le k'\le N}$, are independent standard normal variables that are also independent of $\kappa$ and $(N_k,G_{k})_{k\le \kappa}$. The Monte-Carlo estimator of $\cPh^{NV,2,n}$  is then defined by
$$\frac{1}{M_1}\sum_{m=1}^{M_1} f(\hat{X}^{NV,0,(m)}_{T},Y^{NV,0,(m)}_T)+\frac{1}{M_2}\sum_{m=1}^{M_2} n\left(f(\hat{X}^{NV,1,(m)}_{T},\hat{Y}^{NV,1,(m)}_{T})-f(\hat{X}^{NV,0,(m)}_{T},Y^{NV,0,(m)}_T)\right).$$
\red{The number of samples $M_1$ and $M_2$ are chosen in order to minimize the computational cost for a given statistical precision $\varepsilon>0$, using the same formulas~\eqref{def_M1} and~\eqref{def_M2} with $(\hat{X}^{NV},\hat{Y}^{NV})$ instead of $(\hat{X}^{Ex},Y)$.} Again, to reduce the variance of the estimator, it is important to use the same noise for the coarse and the refined grids. In particular, we take for the scheme $(\hat{X}^{NV,0}_{kh_1},\hat{Y}^{NV,0}_{kh_1})$ on the coarse grid 
$$ N_{\kappa+1}= N^{\textup{st}} \text{ and }G_{\kappa+1}= G^{\textup{st}}:= \frac{1}{\sqrt{n}} \sum_{k=1}^n \tilde{G}_k.$$
Another choice will be considered for $N_{\kappa+1}$ in Subsection~\ref{Subsec_coupling}, but we will always use $G_{\kappa+1}= G^{\textup{st}}_{\kappa+1}$ in our experiments.

\begin{figure}[h!]
  \centering
  \begin{subfigure}[h]{0.49\textwidth}
    \centering
    \includegraphics[width=\textwidth]{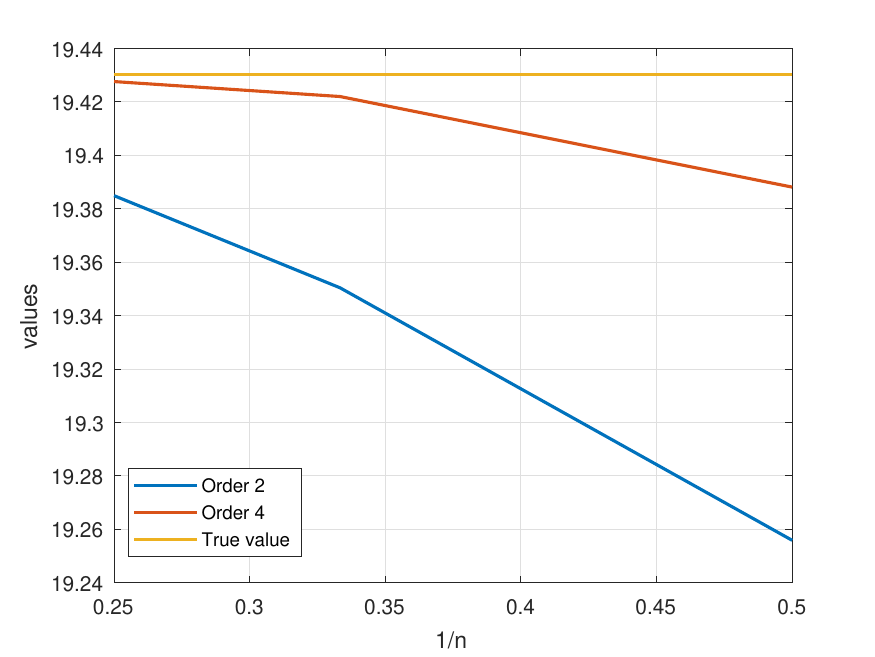}
    \caption{Values plot}
    \label{fig:values_plot_heston1}
  \end{subfigure}
  \hfill
  \begin{subfigure}[h]{0.49\textwidth}
    \centering
    \includegraphics[width=\textwidth]{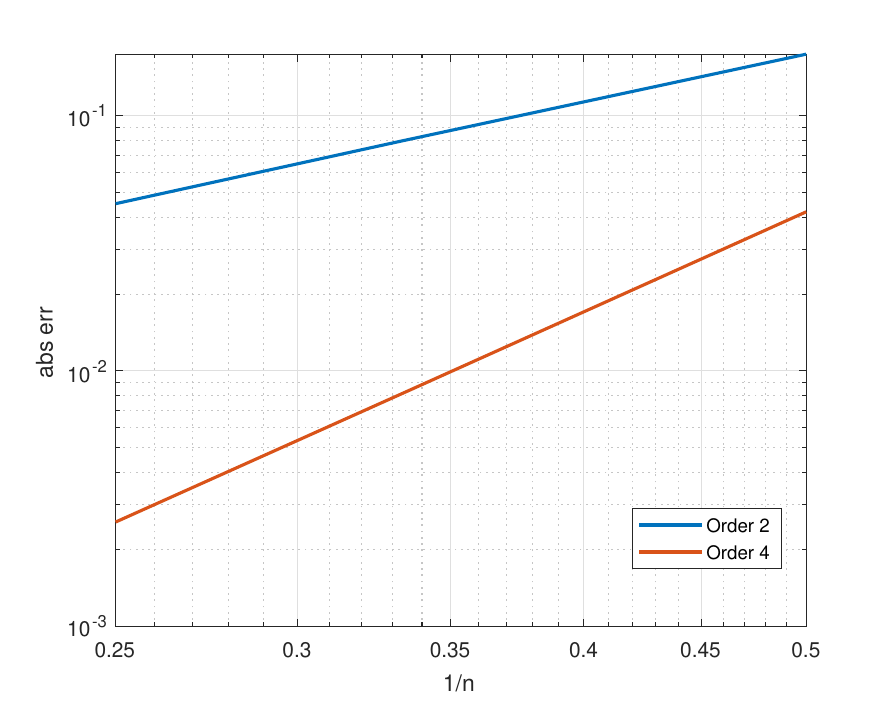}
    \caption{Log-log plot}
    \label{fig:log-log_plot_heston1}
  \end{subfigure}
  \caption{Test function: $f(x,y)=(K-e^x)^+$. Parameters: $S_0=e^{x}=100$, $r=0$, $y=0.2$, $a=0.2$, $b=1$, $\sigma=0.5$, $\rho=-0.7$, $T=1$, $K=105$. Statistical precision $\varepsilon=5$e-4 \red{(too small to be visible on the plots)}.
  Graphic~({\sc a}) shows the Monte Carlo estimated values of $\cPh^{NV,1,n}f$, $\cPh^{NV,2,n}f$ as a function of the time step $1/n$  and the exact value. \red{Graphic~({\sc b}) draws $|\cPh^{NV,\nu,n}f-P_Tf|$ in function of $1/n$ (in log-log scale): the regressed slopes are 1.89 and 4.27 for the second and fourth order respectively.}}\label{Heston_orders}
\end{figure}

\begin{figure}[h!]
  \centering
  \begin{subfigure}[h]{0.49\textwidth}
    \centering
    \includegraphics[width=\textwidth]{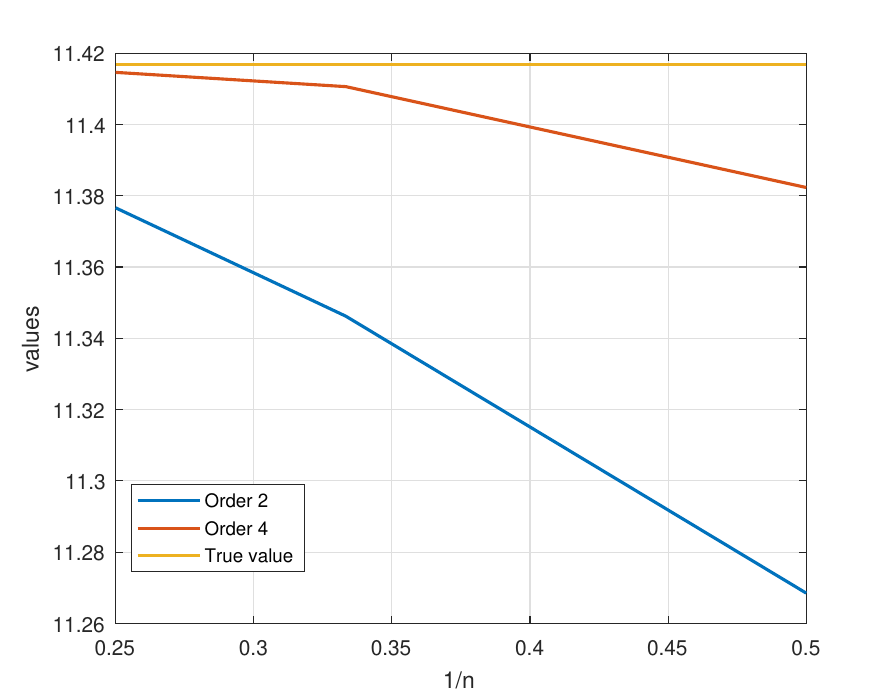}
    \caption{Values plot}
    \label{fig:values_plot_hestonCF1}
  \end{subfigure}
  \hfill
  \begin{subfigure}[h]{0.49\textwidth}
    \centering
    \includegraphics[width=\textwidth]{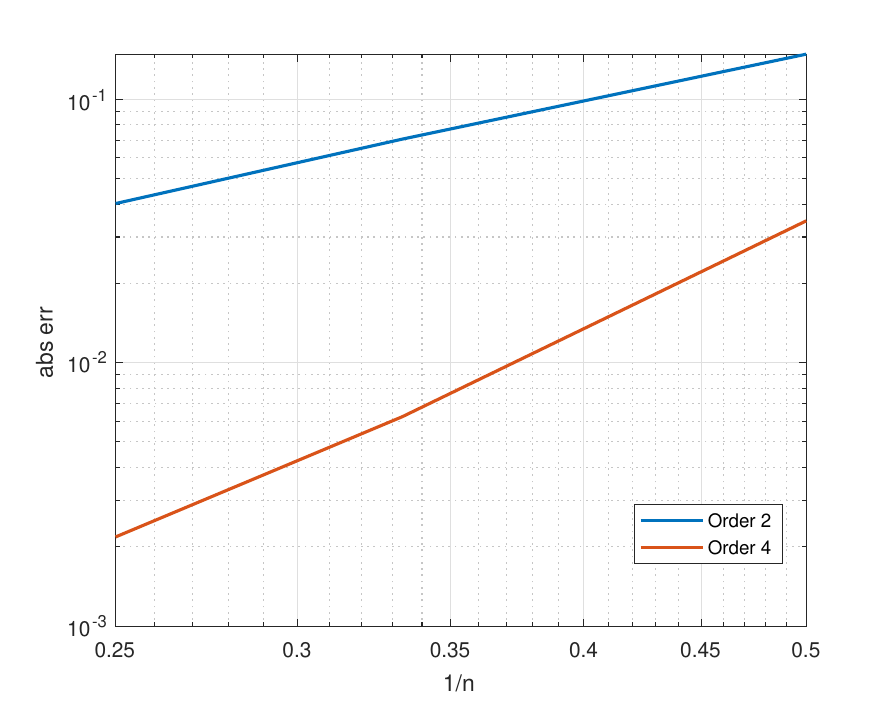}
    \caption{Log-log plot}
    \label{fig:log-log_plot_hestonCF1}
  \end{subfigure}
  \caption{Test function: $f(x,y)=(K-e^x)^+$. Parameters: $S_0=e^x=100$, $r=0$, $y=0.1$, $a=0.1$, $b=1$, $\sigma=1.0$, $\rho=-0.9$, $T=1$, $K=105$. Statistical precision $\varepsilon=5$e-4 \red{(too small to be visible on the plots)}.
  Graphic~({\sc a}) shows the Monte Carlo estimated values of $\cPh^{Ex,1,n}f$, $\cPh^{Ex,2,n}f$ as a function of the time step $1/n$  and the exact value. \red{Graphic~({\sc b}) draws $|\hat{P}^{Ex,\nu,n}f-P_Tf|$ in function of $1/n$ (in log-log scale): the regressed slopes are 1.89 and 4.26 for the second and fourth order respectively.} }\label{HestonCF_orders}
\end{figure}

\subsection{Pricing of European and Asian options}\label{Subsec_Pricing}
We present in Figure~\ref{Heston_orders} the convergence of the approximations $\cPh^{NV,1,n}$ and $\cPh^{NV,2,n}$ for the price of a European option in a case where $\sigma^2\le 4a$. On the left graphic, we draw the values in function of the time step and the exact value of the option price $P_Tf$, that can be calculated with Fourier transform techniques. On the right graphic is plotted the log error in function of the log time step: the estimated slopes are in line with the theoretical order of convergence (2 and 4), even though the test function $f(x)=(K-e^x)_+$ is not as regular as required by Theorem~\ref{main_theorem}.  In Figure~\ref{HestonCF_orders}, we illustrate similarly the convergence of the approximations $\cPh^{Ex,1,n}$ and $\cPh^{Ex,2,n}$ for the price of a European option in a case where $\sigma^2\gg4a$. Again, we observe the theoretical rates of convergence given by Theorem~\ref{main_theorem}.

We now consider the case of Asian options, for which we need to simulate a third coordinate: $\mathcal{I}_t=\int_0^t S^{s,y}_u du=\int_0^t e^{X^{x,y}_u} du$. We explain how to simulate this coordinate for $\hat{P}^{Ex}$, and we do exactly the same then for $\hat{P}^{NV}$. We approximate the integral $\mathcal{I}_t$ by the trapezoidal rule. This gives 
\begin{align*}
  &\hat{\cI}^{Ex,0}_{k h_1} =  \red{\hat{\cI}^{Ex,0}_{(k-1) h_1} +} \frac{e^{\hat{X}^{Ex,0}_{(k-1) h_1}}+e^{\hat{X}^{Ex,0}_{k h_1}}}2 h_1 , \ 1\le k \le n, \\
  &\hat{\cI}^{Ex,1}_{k h_1}=\hat{\cI}^{Ex,0}_{k h_1}, \ 0\le k \le \kappa, \\
  &\hat{\cI}^{Ex,1}_{\kappa h_1 +k'h_2}= \red{\hat{\cI}^{Ex,1}_{\kappa h_1 +(k'-1) h_2} +}\frac{e^{\hat{X}^{Ex,1}_{\kappa h_1+(k'-1)h_2}}+e^{\hat{X}^{Ex,1}_{\kappa h_1 +k'h_2}} }2 h_2 , \ 1\le k'\le n, \\
  &\hat{\cI}^{Ex,1}_{k h_1}= \red{\hat{\cI}^{Ex,1}_{(k-1) h_1} +}\frac{e^{\hat{X}^{Ex,1}_{(k-1) h_1 }}+e^{\hat{X}^{Ex,1}_{kh_1}} }2 h_1, \ \kappa+\red{2}< k\le n,
\end{align*}
with $\hat{\cI}^{Ex,0}_{0}=\hat{\cI}^{Ex,1}_{0}=0$.
Let us mention here that the trapezoidal rule corresponds to the Strang splitting for the generator $\mathcal{L}+e^x\partial_{\mathcal{I}}$. Our formalism would allow to analyse the convergence rate for the Strang splitting for $\mathcal{L}+h(x)\partial_{\mathcal{I}}$, when $h$ is smooth with derivatives of polynomial growth. The exponential function does not fit this condition, and we analyse here the convergence on numerical experiments. 

\begin{figure}[h!]
  \centering
  \begin{subfigure}[h]{0.49\textwidth}
    \centering
    \includegraphics[width=\textwidth]{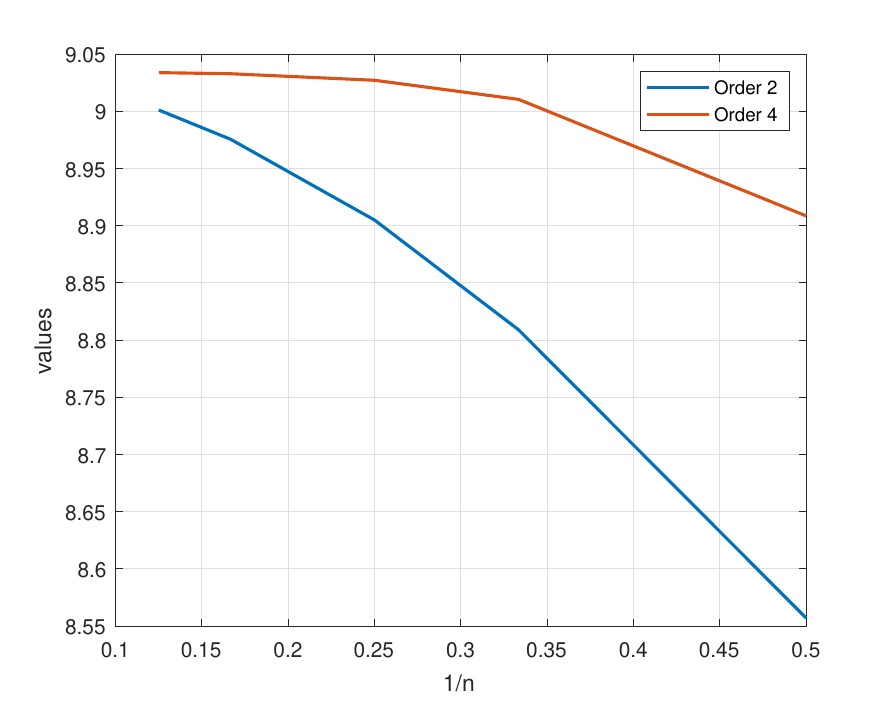}
    \caption{Values plot}
    \label{fig:values_plot_Aheston1}
  \end{subfigure}
  \hfill
  \begin{subfigure}[h]{0.49\textwidth}
    \centering
    \includegraphics[width=\textwidth]{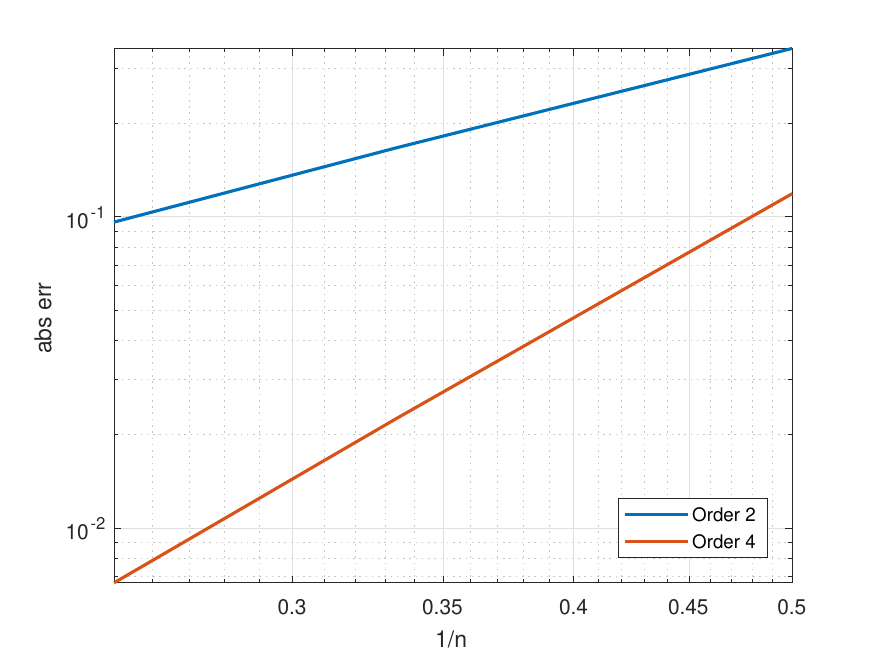}
    \caption{Log-log plot}
    \label{fig:log-log_plot_Aheston1}
  \end{subfigure}
  \caption{Test function: $f(x,y,i)=(K-i/T)^+$. Parameters: $e^{x}=100$, $r=0$, $y=0.2$, $a=0.2$, $b=2$, $\sigma=0.5$, $\rho=-0.7$, $T=1$, $K=100$. Statistical precision $\varepsilon=5$e-4.
  Graphic~({\sc a}) shows the Monte Carlo estimated values of $\cPh^{NV,1,n}f$, $\cPh^{NV,2,n}f$ as a function of the time step $1/n$. \red{Graphic~({\sc b}) draws $|\cPh^{NV,\nu,2n}f-\cPh^{NV,\nu,n}f|$ in function of $1/n$ (in log-log scale): the regressed slopes are 1.85 and 4.30 for the second and fourth order respectively.} }\label{Heston_orders_asian}
\end{figure}

\begin{figure}[h!]
  \centering
  \begin{subfigure}[h]{0.49\textwidth}
    \centering
    \includegraphics[width=\textwidth]{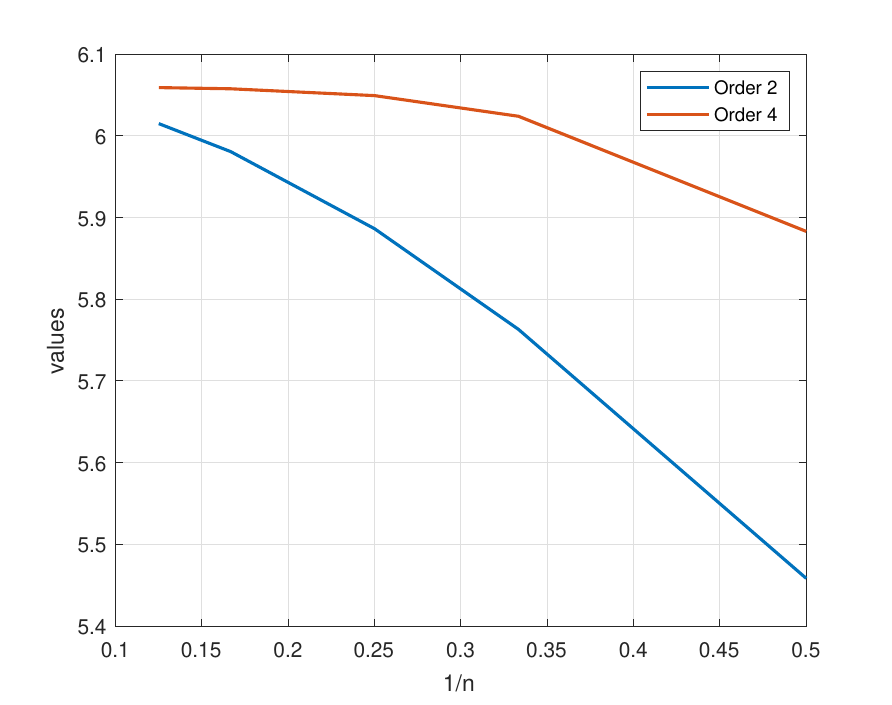}
    \caption{Values plot}
    \label{fig:values_plot_AhestonCF1}
  \end{subfigure}
  \hfill
  \begin{subfigure}[h]{0.49\textwidth}
    \centering
    \includegraphics[width=\textwidth]{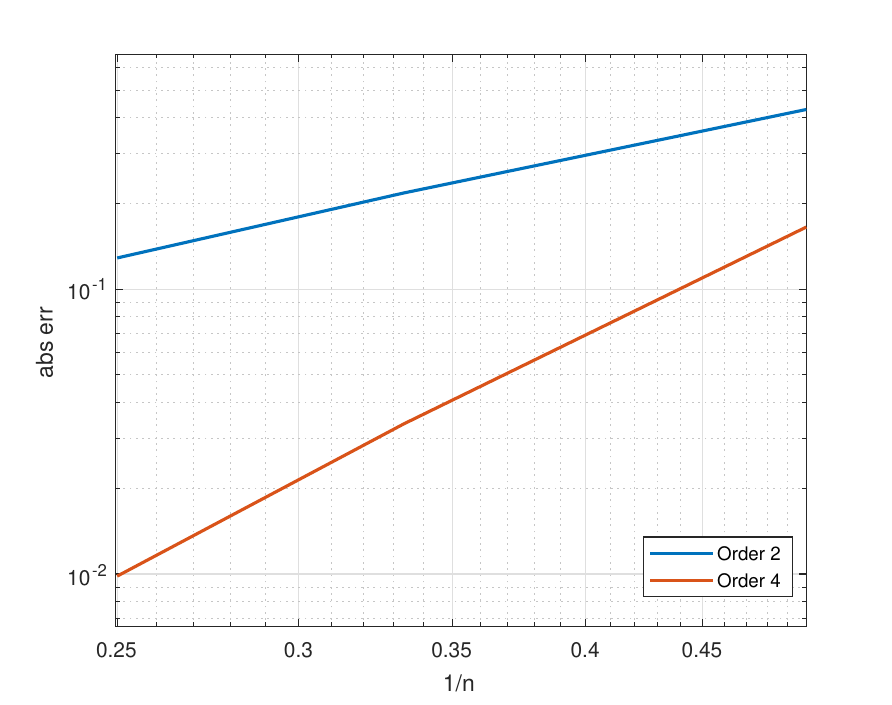}
    \caption{Log-log plot}
    \label{fig:log-log_plot_AhestonCF1}
  \end{subfigure}
  \caption{Test function: $f(x,y,i)=(K-i/T)^+$. Parameters: $e^x=100$, $r=0$, $y=0.1$, $a=0.1$, $b=1$, $\sigma=1.0$, $\rho=-0.9$, $T=1$, $K=100$. Statistical precision $\varepsilon=5$e-4.
  Graphic~({\sc a}) shows the Monte Carlo estimated values of $\cPh^{Ex,1,n}f$, $\cPh^{Ex,2,n}f$ as a function of the time step $1/n$. \red{Graphic~({\sc b}) draws $|\cPh^{Ex,\nu,2n}f-\cPh^{Ex,\nu,n}f|$ in function of $1/n$ (in log-log scale): the regressed slopes are 1.72 and 3.98 for the second and fourth order respectively. }}\label{HestonCF_orders_asian}
\end{figure}

Figure~\ref{Heston_orders_asian} shows the convergence of the approximations~$\cPh^{NV,1,n}$ and~$\cPh^{NV,2,n}$ to calculate the Asian option price $P_Tf=\E[(K-\cI_T\red{/T})^+]$, with $f(x,y,i)=(K-i\red{/T})^+$.  The left graphic draws the obtained value in function of the time step. This time, we do not have an exact value, and we draw in the log-log plot the logarithm of the difference between $\cPh^{NV,\nu,2n}$ and $\cPh^{NV,\nu,n}$. If  $\cPh^{NV,\nu,n}=P_Tf +\frac{c}{n^\eta} +o(n^{-\eta})$ for some $\eta>0$, then $\log(|\cPh^{NV,\nu,2n}-\cPh^{NV,\nu,n}|)=\log(|c|(1-2^{-\eta}))-\eta \log(n) +o(\log(n))$, and therefore the slope of the log-log plot can be seen as an estimation of the rate of convergence. Again, we find empirical rates that are close to 2 for $\nu=1$ and 4 for $\nu=2$, which is in line with the theoretical results. The same observation holds in Figure~\ref{HestonCF_orders_asian} for $\cPh^{Ex,\nu,n}$ in a case where $\sigma^2\ge 4a$.

\red{Last, we compare the second order and fourth order approximations regarding the computation time. We have plotted in Figure~\ref{Heston_4vs2_time} the computation time needed to calculate the Monte Carlo estimator of the second order scheme and of the fourth order approximation for a given statistical precision~$\varepsilon=1e-3$. We have used the one-step coupling presented afterwards in Subsection~\ref{Subsec_coupling} between the coarse and fine time grids. The blue circles (resp. orange squares) indicate the obtained values with the second (resp. fourth) order approximation using $n^2$ time steps (resp. $n$ time steps, one of which is refined $n$ times). We note that their bias are indeed of the same magnitude, but the time required by the fourth order scheme is much lower. In both cases, the computation time is above three times lower for $n=5$, which shows the interest of using random grids.}
\begin{figure}[h!]
  \centering
  \begin{subfigure}[h]{0.49\textwidth}
    \centering
    \includegraphics[width=\textwidth]{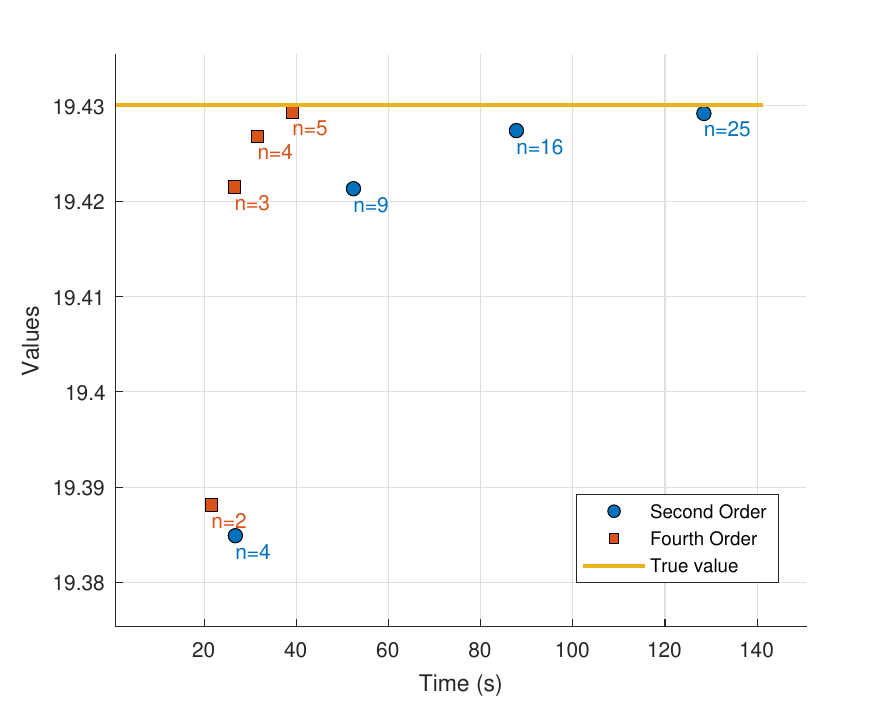}
    \caption{$\sigma^2<4a$}
    \label{fig:values_plot_heston1}
  \end{subfigure}
  \hfill
  \begin{subfigure}[h]{0.49\textwidth}
    \centering
    \includegraphics[width=\textwidth]{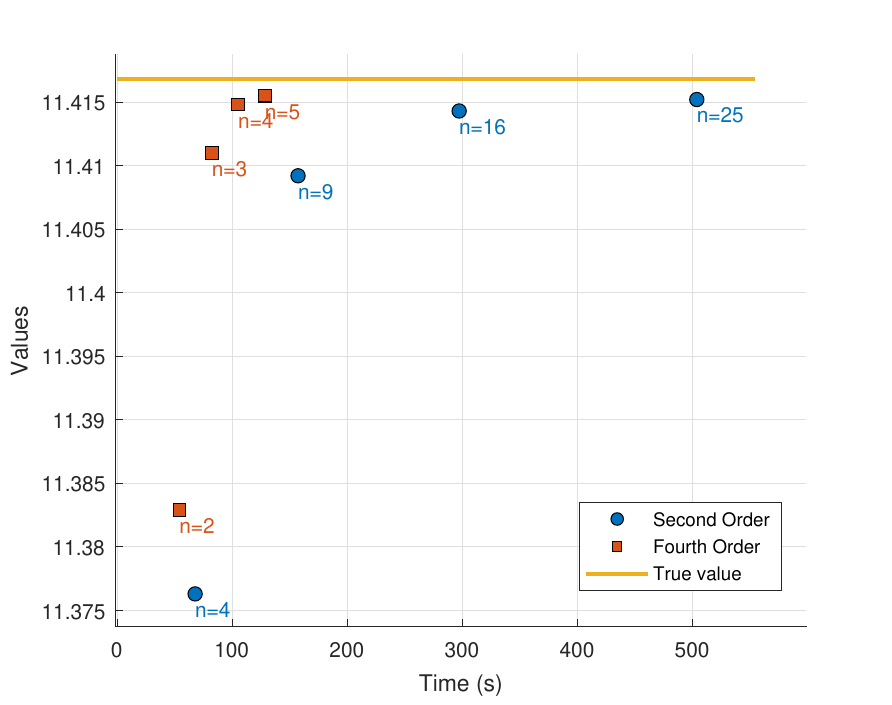}
    \caption{$\sigma^2>4a$}
    \label{fig:log-log_plot_heston1}
  \end{subfigure}
  \caption{Test function: $f(x,y)=(K-e^x)^+$. Statistical precision $\varepsilon=1$e-3.\\
  Graphic~({\sc a}) shows the Monte Carlo estimated values of $\cPh^{NV,1,n^2}f$, $\cPh^{NV,2,n}f$ as a function of the execution time. Parameters: $S_0=e^{x}=100$, $r=0$, $y=0.2$, $a=0.2$, $b=1$, $\sigma=0.5$, $\rho=-0.7$, $T=1$, $K=105$.\\
  Graphic~({\sc b}) shows the Monte Carlo estimated values of $\cPh^{Ex,1,n^2}f$, $\cPh^{Ex,2,n}f$ as a function of the execution time. Parameters: $S_0=e^{x}=100$, $r=0$, $y=0.1$, $a=0.1$, $b=1$, $\sigma=1.0$, $\rho=-0.9$, $T=1$, $K=105$. }\label{Heston_4vs2_time}
\end{figure}

\subsection{Estimators variance and schemes coupling}\label{Subsec_coupling}

In this paragraph,  we discuss  how to couple the refined path and the coarse one in order to minimize the variance of the correction term $$n\left(f(\hat{X}^{SCH,1}_T,\hat{Y}^{SCH,1}_T)-f(\hat{X}^{SCH,0}_T,\hat{Y}^{SCH,0}_T)\right),$$
where $SCH\in \{Ex, NV\}$ indicates the scheme used. We will note $$\mathbf{V}(n)=\text{Var}\left(n\left(f(\hat{X}^{SCH,1}_T,\hat{Y}^{SCH,1}_T)-f(\hat{X}^{SCH,0}_T,\hat{Y}^{SCH,0}_T)\right)\right).$$ 
\red{ Let us note that the variance of $f(\hat{X}^{SCH,0}_T,\hat{Y}^{SCH,0}_T)$ is roughly the one of $f(X_T,Y_T)$ and is thus fixed. In contrast, $\mathbf{V}(n)$ really depends on how the schemes on the coarse and fine grid are coupled. Reducing $\mathbf{V}(n)$ allows to reduce the number of samples required by~\eqref{def_M1} and~\eqref{def_M2}, and thus the computation time.  }

While it is rather natural to take the same driving noise for the other time steps, the difficulty is to find a good coupling on $[\kappa h_1, (\kappa+1) h_1]$ between the noise used on the refined time grid and the one of the coarse grid. This issue does not exist for~$Y$ when it is simulated exactly, and for the Ninomiya-Victoir scheme we always take $G_{\kappa+1}=\frac 1 {\sqrt{n}} \sum_{k=1}^n \tilde{G}_k$. 
We therefore discuss the choice of $N_{\kappa+1}$ that is used for the simulation of~$X$. We consider the two following choices:
$$
N_{\kappa+1}=N^{\textup{st}}=\frac 1{\sqrt{n}}\sum_{k=1}^n\tilde{N}_k,\ \text{ or }  N_{\kappa+1}=N^{\textup{av}} = \frac{\sum_{k=1}^n \sqrt{\hY^{SCH,1}_{\kappa h_1 +(k-1)h_2}+\hY^{SCH,1}_{\kappa h_1 +kh_2}} \tilde{N}_k}{\sqrt{\sum_{k=1}^n \hY^{SCH,1}_{\kappa h_1 +(k-1)h_2}+\hY^{SCH,1}_{\kappa h_1 +kh_2}}}.
$$
Note that $N^{\textup{av}}\sim \mathcal{N}(0,1)$, since the normal variables $\tilde{N}_k$, $1\le k\le n$, are independent of the $Y$ component. This second choice is also rather natural since it weights each normal variable with the corresponding volatility on each fine time-step. A similar coupling has been proposed by Zheng~\cite{ZhengC} in a context of Multi-Level Monte-Carlo for the Heston model.

Besides this choice of coupling, we also consider another scheme for the Heston model. In fact, an alternative of Strang's scheme is to introduce a Bernoulli random variable of parameter~$1/2$ that selects which scheme is used first. We want to see if this additional random variable has an incidence on the variance of the correcting term.  This  scheme is given by
$$\hX^{SCH,x,y}_t = x +(r-\frac{\rho}{\sigma}a)t +\frac{\rho}{\sigma}(\hY^{SCH,y}_t-y) +(\frac{\rho}{\sigma}b-\frac{1}{2})\frac{y+\hY^{SCH,y}_t}{2}t +\sqrt{y + B (1-\rho^2)(\hY^{SCH,y}_t-y) t}N, 
$$
where $N\sim \mathcal{N}(0,1)$ and $B\sim \mathcal{B}(1/2)$ is an independent Bernoulli random variable. The random variable $\hY^{SCH,y}_t$ is either equal to $Y^y_t$ for $SCH=Ex$ or to $\hY^y_t$ for $SCH=NV$. This scheme has been used in the numerical experiments of~\cite{AL} and is indicated with "Bernoulli" in the following tables.

\red{
As far as European options are concerned, we now present another scheme for the distribution of $X$ at time $T$, which as we shall see produces the correction with the lowest variance. It is based on the same argument as the one developed by Romano and Touzi~\cite{RoTo}.  Let  $\hX^{SCH,0}$ and $\hX^{SCH,1}$ where $SCH\in\{NV,Ex\}$ denote the schemes introduced in Subsection~\ref{Subsec_implementation}. We have at the final time $T=nh_1$
\begin{align*}
  \hX^{SCH,0}_{nh_1} =& x +(r-\frac{\rho}{\sigma}a)T +\frac{\rho}{\sigma}(\hY^{SCH,0}_{nh_1}-y) +(\frac{\rho}{\sigma}b-\frac{1}{2}) \sum_{j=0}^{n-1}\frac{h_1}{2}(\hY^{SCH,0}_{jh_1}+\hY^{SCH,0}_{(j+1)h_1}) \\
  &+ \brho \sum_{j=0}^{n-1}\sqrt{\frac{h_1}{2}(\hY^{SCH,0}_{jh_1}+\hY^{SCH,0}_{(j+1)h_1})}N_{j+1}.
\end{align*}
We remark that the law of  $\hX^{SCH,0}_T$ given $(\hY^{SCH,0}_{jh_1}, j\in{0,\ldots,n})$ is normally distributed with mean  and variance respectively equal to
$$
x+(r-\frac{\rho}{\sigma}a)T+\frac{\rho}{\sigma}(\hY^{SCH,0}_{nh_1}-y) +(\frac{\rho}{\sigma}b-\frac{1}{2})\widehat{IY}^{SCH,0}_{nh_1} \text{ and } 
(1-\rho^2)\widehat{IY}^{SCH,0}_{nh_1},
$$
with $$\widehat{IY}^{SCH,0}_{nh_1} = \frac{h_1}{2}\sum_{j=0}^{n-1}(\hY^{SCH,0}_{jh_1}+\hY^{SCH,0}_{(j+1)h_1}).$$
In the same way, the  law of  $\hX^{SCH,1}_T$ given $\hY^{SCH,1}$ is normally distributed with mean  and variance respectively equal to
$$
x+(r-\frac{\rho}{\sigma}a)T+\frac{\rho}{\sigma}(\hY^{SCH,1}_{nh_1}-y) +(\frac{\rho}{\sigma}b-\frac{1}{2})\widehat{IY}^{SCH,1}_{nh_1} \text{ and } 
(1-\rho^2)\widehat{IY}^{SCH,1}_{nh_1},
$$
with $$\widehat{IY}^{SCH,1}_{nh_1} = \frac{h_1}{2}\sum_{j\not= \kappa}(\hY^{SCH,1}_{jh_1}+\hY^{SCH,1}_{(j+1)h_1}) + \frac{h_2}{2}\sum_{j= 0}^{n-1}(\hY^{SCH,1}_{\kappa h_1+jh_2}+\hY^{SCH,1}_{\kappa h_1+(j+1)h_2}).$$
This give us naturally the following simulation scheme
\begin{equation}\label{one_step_coupling}
\hX^{SCH,\ell}_{nh_1} = x +(r-\frac{\rho}{\sigma}a)T +\frac{\rho}{\sigma}(\hY^{SCH,\ell}_T-y) +(\frac{\rho}{\sigma}b-\frac{1}{2}) \widehat{IY}^{SCH,\ell}_{nh_1} + \brho \sqrt{\widehat{IY}^{SCH,\ell}_{nh_1}}N, \ell\in \{0,1\},
\end{equation}
where the same  single standard Gaussian $N$ is used for the coarse and fine grid. 
This scheme does not change the law of $(\hX^{SCH,\ell}_{nh_1},\hY^{SCH,\ell})$ with respect to the scheme presented in Subsection~\ref{Subsec_implementation}, but it allows us to make a “perfect” coupling regarding $Y$-independent Gaussian noise between the refined and coarse grids. We will call this procedure the ``one-step" coupling. Algorithm~\ref{algo} gives the pseudocode to calculate the estimator in Equation~\eqref{estimateur_ex} obtained with this coupling.}

\begin{algorithm}[htp]
  \red{  \caption{Calculation of $\E[f(X_T,Y_T)]$ with the "one-step" coupling, when $Y$ is sampled exactly}\label{algo}
  \begin{algorithmic}
    \STATE Input: $M_1\ge M_2, T, n$, initial values $x\in \R$ and $y\ge 0$, and model parameters
    \STATE $h_1 \leftarrow T/n$, $h_2 \leftarrow T/n^2$
    \STATE $EV\leftarrow 0,\ Corr \leftarrow 0$ \quad // Expected value and correction terms
    \FOR{$m=1$ to $M_1$}
    \IF{$m>M_2$}
    \STATE Draw a path $(Y_{kh_1})_{0\le k\le n}$ with $Y_0=y$ and $N\sim\mathcal{N}(0,1)$
    \STATE $IY\leftarrow \frac{h_1}{2}\sum_{j=0}^{n-1}(Y_{jh_1}+Y_{(j+1)h_1})$
    \STATE $X \leftarrow x+(r-\frac{\rho}{\sigma}a)T+\frac{\rho}{\sigma}(Y_{nh_1}-y) +(\frac{\rho}{\sigma}b-\frac{1}{2})IY+\sqrt{(1-\rho^2)IY}N
    $
    \STATE $EV\leftarrow EV+f(X,Y_{nh_1})$ 
    \ELSE 
    \STATE Draw $\kappa$ uniformly on $\{0,\dots,n-1\}$  and $N\sim\mathcal{N}(0,1)$
    \STATE Draw a path on the fine grid $(Y_{kh_1})_{0\le k\le n}$ and $(Y_{\kappa h_1+kh_2})_{1\le k\le n-1}$ with $Y_0=y$
    \STATE $IY^0\leftarrow \frac{h_1}{2}\sum_{j=0}^{n-1}(Y_{jh_1}+Y_{(j+1)h_1})$
    \STATE $X^0 \leftarrow x+(r-\frac{\rho}{\sigma}a)T+\frac{\rho}{\sigma}(Y_{nh_1}-y) +(\frac{\rho}{\sigma}b-\frac{1}{2})IY^0+\sqrt{(1-\rho^2)IY^0}N
    $
    \STATE $IY^1\leftarrow IY^0- \frac{h_1}{2}(Y_{\kappa h_1}+Y_{(\kappa+1)h_1}) +\frac{h_2}{2}\sum_{j=0}^{n-1}(Y_{\kappa h_1+jh_2}+Y_{\kappa h_1+(j+1)h_2})$
    \STATE $X^1 \leftarrow x+(r-\frac{\rho}{\sigma}a)T+\frac{\rho}{\sigma}(Y_{nh_1}-y) +(\frac{\rho}{\sigma}b-\frac{1}{2})IY^1+\sqrt{(1-\rho^2)IY^1}N
    $
    \STATE $EV\leftarrow EV+f(X^0,Y_{nh_1})$
    \STATE $Corr\leftarrow Corr+n(f(X^1,Y_{nh_1})-f(X^0,Y_{nh_1}))$
    \ENDIF
    \ENDFOR
    \RETURN $\frac{EV}{M_1}+\frac{Corr}{M_2}$
  \end{algorithmic}}
\end{algorithm}

\begin{table}[h!]
  \begin{tabular}{ |c |c|c|c|c|c|c|  }
    \cline{1-7}
        Scheme & Coupling & $n=2$     & $n=4$      & $n=8$       & $n=16$    & $n=32$    \\
      \hline
        \multicolumn{1}{|c|}{\multirow{2}{*}{$NV$}} &  \multirow{2}{*}{$N^{\textup{st}}$}    & 12.13 & 18.48 & 21.85 & 23.56 & 24.41 \\ 
        \multicolumn{1}{|c|}{}&     & (0.01) & (0.01) & (0.01) & (0.02) & (0.02) \\
      \hline
        \multicolumn{1}{|c|}{ \multirow{2}{*}{$NV$}} &  \multirow{2}{*}{$N^{\textup{av}}$}      & 8.31 & 9.08 & 8.91 & 8.70 & 8.57\\ 
        \multicolumn{1}{|c|}{}& & (0.01) & (0.01) & (0.01) & (0.01) & (0.01) \\ 
      \hline
        \multicolumn{1}{|c|}{ \multirow{2}{*}{$\red{NV}$}} &  \multirow{2}{*}{$\red{\text{one-step}}$}      & 5.58 & 3.52 & 1.88 & 0.97 & 0.49\\ 
        \multicolumn{1}{|c|}{}& & (0.005) & (0.004) & (0.003) & (0.002) & (0.001) \\ 
      \hline
        \multicolumn{1}{|c|}{\multirow{2}{*}{$NV$, Bernoulli}} &  \multirow{2}{*}{$N^{\textup{st}}$} & 33.27 & 41.96 & 46.14 & 48.27 & 49.37 \\ 
        \multicolumn{1}{|c|}{}&      & (0.02) & (0.03) & (0.03) & (0.04) & (0.04) \\
      \hline
        \multicolumn{1}{|c|}{\multirow{2}{*}{$NV$, Bernoulli}} &  \multirow{2}{*}{$N^{\textup{av}}$} & 25.11 & 28.55 & 30.74 & 32.13 & 32.85\\ 
        \multicolumn{1}{|c|}{}&   & (0.02) & (0.02) & (0.03) & (0.03) & (0.03) \\       
    \hline
      \multicolumn{1}{|c|}{\multirow{2}{*}{$Ex$}} &  \multirow{2}{*}{$N^{\textup{st}}$}     & 30.19 & 30.19 & 28.09 & 26.74 & 26.02 \\ 
      \multicolumn{1}{|c|}{}&  & (0.02) & (0.02) & (0.02) & (0.02) & (0.02) \\
    \hline
      \multicolumn{1}{|c|}{\multirow{2}{*}{$Ex$}} &  \multirow{2}{*}{$N^{\textup{av}}$}  & 26.35 & 20.80 & 15.17 & 11.88 &  10.18\\ 
      \multicolumn{1}{|c|}{}& & (0.01) & (0.01) & (0.01) & (0.01) & (0.01) \\ 
    \hline
      \multicolumn{1}{|c|}{\multirow{2}{*}{$\red{Ex}$}} &  \multirow{2}{*}{$\red{\text{one-step}}$}  & 23.58 & 15.15 & 8.04 & 4.08 &  2.05\\ 
      \multicolumn{1}{|c|}{}& & (0.011) & (0.007) & (0.004) & (0.002) & (0.001) \\
    \hline
    
  \end{tabular}
  \caption{Variance $\mathbf{V}(n)$ estimated  with $10^8$ samples, the 95\% confidence precision is indicated below in parentheses. Test function: $f(x,y)=(K-e^x)^+$. Parameters: $e^x=100$, $r=0$, $x=0.2$, $a=0.2$, $b=1.0$, $\sigma=0.5$, $\rho=-0.7$, $T=1$, $K=105$.}\label{Table_VAR_Heston}
\end{table}

\begin{table}[h!]
  \begin{tabular}{ |c |c|c|c|c|c|c|  }
    \cline{1-7}
      Scheme & Coupling & $n=2$     & $n=4$      & $n=8$       & $n=16$    & $n=32$    \\
    \hline
      \multicolumn{1}{|c|}{\multirow{2}{*}{$Ex$}} &  \multirow{2}{*}{$N^{\textup{st}}$}      & 38.69 & 39.51 & 36.96 & 35.23 & 34.32 \\ 
      \multicolumn{1}{|c|}{}&      & (0.03) & (0.03) & (0.03) & (0.03) & (0.03) \\
    \hline
      \multicolumn{1}{|c|}{\multirow{2}{*}{$Ex$}} &  \multirow{2}{*}{$N^{\textup{av}}$}     & 32.49 & 26.01 & 19.16 & 15.20 & 13.17 \\ 
      \multicolumn{1}{|c|}{}& & (0.02) & (0.02) & (0.02) & (0.01) & (0.01) \\
    \hline
      \multicolumn{1}{|c|}{\multirow{2}{*}{$\red{Ex}$}} &  \multirow{2}{*}{$\red{\text{one-step}}$}     & 29.60& 19.30 & 10.29 & 5.24 & 2.63 \\ 
      \multicolumn{1}{|c|}{}& & (0.02) & (0.013) & (0.008) & (0.004) & (0.002) \\ 
    \hline
        \multicolumn{1}{|c|}{\multirow{2}{*}{$Ex$, Bernoulli}} &  \multirow{2}{*}{$N^{\textup{st}}$}  & 65.66 & 68.93 & 66.95 & 65.47 & 65.01 \\ 
        \multicolumn{1}{|c|}{}&      & (0.04) & (0.05) & (0.05) & (0.05) & (0.05) \\
      \hline
        \multicolumn{1}{|c|}{\multirow{2}{*}{$Ex$, Bernoulli}} &  \multirow{2}{*}{$N^{\textup{av}}$}    & 61.04 & 57.45 & 50.98 & 47.03 & 45.12 \\ 
        \multicolumn{1}{|c|}{}&   & (0.04) & (0.04) & (0.04) & (0.04) & (0.04) \\      
    \hline
    
  \end{tabular}
  \caption{Variance $\mathbf{V}(n)$ estimated  with $10^8$ samples,  the 95\% confidence precision is indicated below in parentheses. Test function: $f(x,y)=(K-e^x)^+$. Parameters: $e^x=100$, $r=0$, $x=0.1$, $a=0.1$, $b=1.0$, $\sigma=1.0$, $\rho=-0.9$, $T=1$, $K=105$.}\label{Table_VAR_HestonCF}
\end{table}

We have reported in Tables~\ref{Table_VAR_Heston} and \ref{Table_VAR_HestonCF} the variance of the correcting term for the different schemes, the two different choices for $N_{\kappa+1}$, \red{the one-step coupling}, and different values of~$n$.  Table~\ref{Table_VAR_Heston} reports a case with $\sigma^2\le 4a$ where the Ninomiya-Victoir scheme is well defined, while Table~\ref{Table_VAR_HestonCF} reports a case with $\sigma^2>4a$. In both cases, we have taken the example of a European Put option. In both tables, we observe that the scheme using a Bernoulli random variable has a correcting term of  higher variance. Besides, it requires to simulate one more random variable. Thus, the schemes based on the Strang composition are better suited with the convergence acceleration using random grids.   

We now comment the coupling of the schemes. In all our experiments, the coupling using~$N^{\textup{av}}$ gives a lower variance than the one using~$N^{\textup{st}}$. Besides, we observe that the gain factor between the two choices is  increasing with~$n$. We have a gain factor of $\frac{24.41}{8.57}\approx 2.85$ in Table~\ref{Table_VAR_Heston} for the Ninomiya-Victoir scheme and $n=32$, and of $2.32$ in Table~\ref{Table_VAR_HestonCF} for the scheme $Ex$ with $n=16$. As a consequence, we recommend the use of $N^{\textup{av}}$ to couple the schemes on the coarse and fine grids.  \red{The ``one-step" coupling is even much better. It gives a variance that goes to~$0$ as $n\to \infty$ and seems to scale linearly with $1/n$. We recall however that this coupling can only be used for European options, but we have to use the other ones for pathwise options.}

\subsection{Towards higher order approximations of Rough Heston process}
In this last paragraph, we propose to investigate numerically the approximations with random grids in the case of the rough Heston model. We first recall that the rough Heston model proposed by El Euch and Rosenbaum~\cite{EER19} is given by $S_t=e^{X^{x,y}_t}$, where
\begin{align}
  X^{x,y}_t &= x + \int_0^t \left(r-\frac 12 Y^y_u \right) du + \int_0^t\sqrt{Y^y_u}  (\rho dW_u + \sqrt{1-\rho^2} dB_u),\\
  Y^y_t &= y + \int_0^t K(t-u) (a-bY^y_u) du +\int_0^t K(t-u)\sigma\sqrt{Y^y_u} dW_u, \label{rough_vol_SVE}
\end{align}
 where $K$ is the fractional kernel given by
\begin{equation}\label{fractional_kernel}
  K(t)= \frac{t^{H-1/2}}{\Gamma(H+1/2)}
\end{equation}
with Hurst parameter $H\in(0,1/2)$.
The convolution through the kernel $K$ in \eqref{rough_vol_SVE} introduces a dependence of the volatility $Y$ on the past, and the process $(X,Y)$ is not Markovian. Despite this, it is possible to find a process in larger dimension that is Markovian and approximates the rough process well. It is well known (see e.g. Alfonsi and Kebaier \cite[Proposition 2.1]{AK24}) that if we replace the rough kernel $K$ in \eqref{rough_vol_SVE}  by a discrete completely monotone kernel
\begin{equation}\label{discrete_completely_monotone_kernel}
  \tK(t) = \sum_{k=1}^d \gamma_k e^{-\rho_k t},\qquad \gamma_k,\rho_k\ge 0,\,k\in\{1,\ldots,d\},
\end{equation} 
then the solution of the Stochastic Volterra Equation
\begin{equation}\label{dcmk_vol_SVE}
  \tY_t = y + \int_0^t \tK(t-u) (a-b\tY^y_u) du +\int_0^t \tK(t-u)\sigma\sqrt{\tY^y_u} dW_u, 
\end{equation}
is given by $\tY_t = y + \sum_{k=1}^d \gamma_k\tY^k_t$, where $\tbfY=(\tY^1,\ldots,\tY^d)$ solves the SDE in $\R^d$:
\begin{equation}\label{dcmk_vol_SDE}
  \tY^k_t = -\rho_k\int_0^t \tY^k_u du + \int_0^t (a-b\tY_u) du + \int_0^t \sigma \sqrt{\tY_u}dW_u,\quad k\in\{1,\ldots,d\}, t\ge 0.
\end{equation}
We want to build a second order scheme for \eqref{dcmk_vol_SDE} along with
\begin{equation*}
  \tX_t = x + \int_0^t(r-\frac 12 \tY_u) du + \int_0^t\sqrt{\tY_u} (\rho dW_u + \sqrt{1-\rho^2} dB_u).
\end{equation*}
This multifactor model has been first developed by Abi Jaber and El Euch \cite{AEE19} and can be seen under a suitable choice of $\tK(t) = \sum_{k=1}^d \gamma_k e^{-\rho_k t}$  as an approximation of the rough Heston model.

We present here a second order approximation scheme for the couple $(\tX,\tY)$ that preserve the positivity of $\tY$ as proved by Alfonsi in \cite[Theorem 4.2 and Subsection 4.3]{AA_NPK}. The infinitesimal generator of the $d+1$ dimensional process $(\tX,\tbfY)$ is given by 
\begin{multline}\label{inf_gen_dcmk_vol}
  \mcL f(x,\bfy)  = (r-\frac{1}{2}y')\partial_{x} f(x,\bfy)+\sum_{k=1}^d (a-\rho_k y_k-b y')\partial_{y_k} f(x,\bfy)\\ 
  +\frac{1}{2} \partial_x^2f(x,\bfy)+\sum_{k=1}^d 2\rho\sigma \partial_{x}\partial_{y_k} f(x,\bfy) +\frac{1}{2} \sum_{k,l=1}^d \sigma^2 y' \partial_{y_k}\partial_{y_l} f(x,\bfy),
\end{multline}
where $\bfy = (y_1,\ldots,y_d)$ and $y'=y+\sum_{j=1}^d \gamma_j y_j$. We use the following splitting $\mcL=\mcL_1+\mcL_2$, where $\mcL_1 f = -\sum_{k=1}^d \rho_k y_k \partial_{y_k} f$ is the infinitesimal generator associated to
\begin{equation}\label{mcL1_linear_ODE}
  \begin{aligned}
    dX_t   &=0, \\
    dY^k_t &= -\rho_k Y^k_t dt,\qquad k\in\{1,\ldots,d\},
  \end{aligned}
\end{equation}
and $\mcL_2$ is associated to
\begin{equation}\label{mcL2_SDE}
  \begin{aligned}
    dX_t   &=(r-\frac{1}{2}Y_t)d_t+\sqrt{Y_t} (\rho dW_t + \sqrt{1-\rho^2} dB_t), \\
    dY^k_t &=  (a-bY_t) dt +  \sigma \sqrt{Y_t}dW_t, \text{ with }Y_t = y+\sum_{k=1}^d \gamma_k Y^k_t \quad k\in\{1,\ldots,d\}.
  \end{aligned}
\end{equation}
The linear ODE \eqref{mcL1_linear_ODE} has the exact solution
\begin{equation}
  \psi_1(t,x,{\bf y})=(x,{\bf y}_t), \text { with } {\bf y}_t=(y_1e^{-\rho_1 t}, \ldots, y_de^{-\rho_d t}).
\end{equation}
From \eqref{mcL2_SDE}, we obtain that $(X_t,Y_t)$ satisfies the following log-Heston SDEs
\begin{equation}\label{K0_logHeston_SDE}
  \begin{aligned}
    X_t   &= x + \int_0^t(r-\frac{1}{2}Y_u)d_t+\int_0^t\sqrt{Y_u} (\rho dW_u + \sqrt{1-\rho^2} dB_u), \\
    Y_t &= y' + \int_0^t K(0)(a-bY_u) du +\int_0^tK(0)\sigma \sqrt{Y_u} dW_u,
  \end{aligned}
\end{equation}
and $dY^k_t=\frac{1}{K(0)} dY_t$ (note that $K(0)=\sum_{j=1}^d \gamma_j$).
So, having a second order scheme $(\hat{X}^{x,y'}_t,\hat{Y}^{y'}_t)$ for $(X_t,Y_t)$, we can build a second order scheme for~\eqref{mcL2_SDE} by
\begin{equation}
  (\hat{X}^{x,y}_t,\hat{Y}^{1,y}_t,\dots,\hat{Y}^{1,y}_t) = (\hat{X}^{x,y'}_t,A_\bfy(\hat{Y}^{y'}_t)),
\end{equation}
where 
\begin{equation}
  A_\bfy(z) = \left(y_1 +\frac{z-y'}{K(0)}, \ldots, y_d+\frac{z-y'}{K(0)}  \right).
\end{equation}
In the end, we use again the Strang composition to get the second order scheme for~\eqref{inf_gen_dcmk_vol} starting from $(x,{\bf y})$ and time-step $t>0$:
\begin{equation}\label{scheme_multif_Heston} \psi_1 \left(t/2,\hat{X}^{x,y'_{t/2}}_t,A_{\bfy_{t/2}}(\hat{Y}^{y'_{t/2}}_t) \right),\end{equation}
where $y'_{t/2}=y+\sum_{j=1}^d \gamma_j y_j e^{-\rho_jt/2}$.

Now that we have defined the approximation scheme~\eqref{scheme_multif_Heston} for the multifactor Heston model, we want to use it to test numerically the convergence acceleration provided by the random grids. The construction of the estimators is identical to the one of $\cPh^{NV,1,n}$ and $\cPh^{NV,2,n}$ in Subsection~\ref{Subsec_implementation} and we do not reproduce it here. Also, by a slight abuse of notation, we still denote by $\cPh^{NV,1,n}$ and $\cPh^{NV,2,n}$ these estimators that are well defined $\tK(0)\sigma^2<4a$. Unfortunately, there does not exist yet -- up to our knowledge -- efficient exact simulation method for the multifactor Cox-Ingersoll-Ross process. It it were the case, we could then define the corresponding estimators $\cPh^{Ex,1,n}$ and $\cPh^{Ex,2,n}$ exactly as in Subsection~\ref{Subsec_implementation}, for any $\sigma>0$. Here, we thus present only simulation in the case $\tK(0)\sigma^2<4a$. These simulations are intended to be a first attempt to get higher order approximations of the multifactor Heston model. We let the case $\tK(0)\sigma^2>4a$ as well as theoretical proofs of convergence in this model for future studies.

Multi exponential approximations of the rough kernel are available in literature, see e.g. Abi Jaber, El Euch \cite{AEE19} and Alfonsi, Kebaier \cite{AK24}. In our simulation we will use the algorithm BL2 suggested by Bayer and Breneis in \cite{BB23}, that optimizes the $\L^2([0,T])$-error between $K$ and $\tK$ while limiting high values of $\rho_k$. In particular, we will use the approximate BL2 Kernel with $d=3$ exponential factors, that has been proven to approximate a whole volatility surface of rough Heston call prices with approximately 1\% of maximal relative error~\cite[Table 4, third column]{BB23}.
When the Hurst parameter $H=0.1$ the nodes and weights are resumed following table
\begin{center}
  \begin{tabular}{ |r|r|r| } 
   \hline
   $\rho_1 = 0.08399474$ & $\rho_2 = 5.64850577$ & $\rho_3 = 118.00624702$ \\
   \hline
   $\gamma_1 = 0.80386099$ & $\gamma_2 = 1.60786461$ & $\gamma_3 = 8.80775525$  \\ 
   \hline
  \end{tabular}
  \end{center}
We consider European put option prices and present in Figure~\ref{fig:values_plot_rheston1}  a plot of the values of $\cPh^{NV,1,n}f$ and $\cPh^{NV,2,n}f$ as a function of the time step with the exact value obtained by Fourier techniques. In Figure~\ref{fig:log-log_plot_rheston1}, we draw a log-log plot to quantify the order of convergence.
First, we observe that we obtain a much larger bias than in our previous numerical experiments for the Heston process, Figure~\ref{fig:values_plot_hestonCF1}. This is mainly due to the map $\psi_1$ that has relatively large nodes, namely $\rho_2$ and especially $\rho_3$. The contribution of these exponential factors in the dynamics of the scheme gets more important when the time step is sufficiently small. However, even if the bias is more important, the speed of convergence are still in line with the theoretical ones.  The regressed slopes for $\cPh^{NV,1,n}f$ and $\cPh^{NV,2,n}f$ are respectively 1.89 and 3.98, showing that the scheme is indeed a second-order scheme and that the boosting technique with random grids works again in this case.

\begin{figure}[h!]
  \centering
  \begin{subfigure}[h]{0.49\textwidth}
    \centering
    \includegraphics[width=\textwidth]{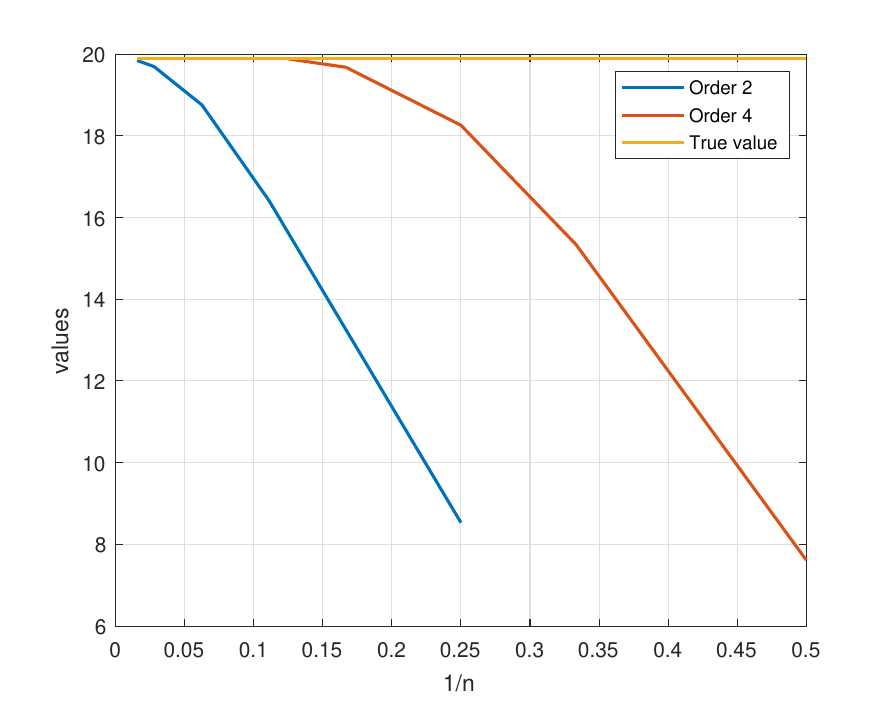}
    \caption{Values plot}
    \label{fig:values_plot_rheston1}
  \end{subfigure}
  \hfill
  \begin{subfigure}[h]{0.49\textwidth}
    \centering
    \includegraphics[width=\textwidth]{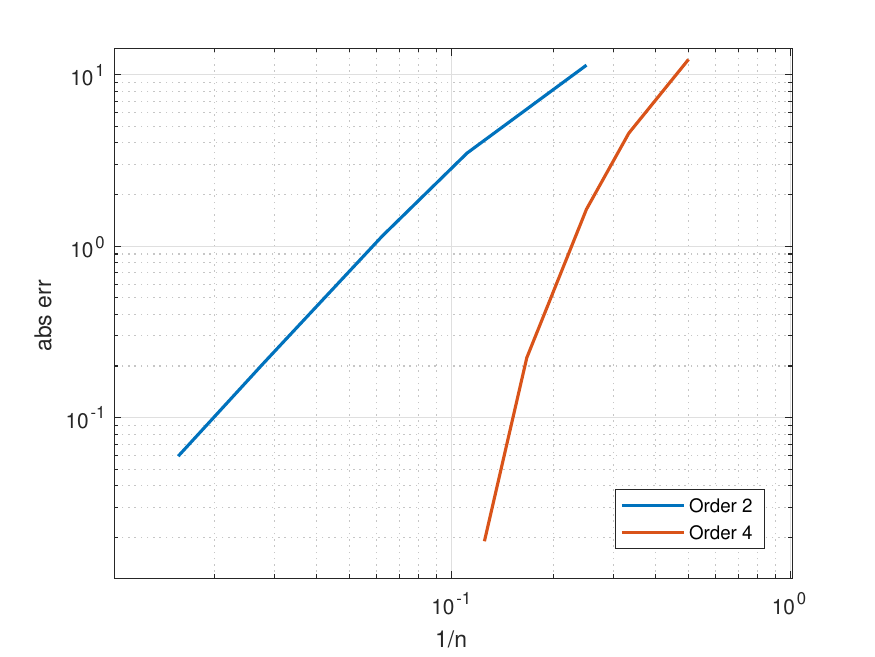}
    \caption{Log-log plot}
    \label{fig:log-log_plot_rheston1}
  \end{subfigure}
  \caption{Test function: $f(x,y)=(K-e^x)^+$. Parameters: $S_0=e^x=100$, $r=0$, $y=0.1$, $a=0.3$, $b=1$, $\sigma=0.1$, $\rho=-0.7$, $T=1$, $K=105$. Statistical precision $\varepsilon=2$e-3.
  Graphic~({\sc a}) shows the Monte Carlo estimated values of $\cPh^{NV,1,n}f$, $\cPh^{NV,2,n}f$ as a function of the time step $1/n$  and the exact option value. \red{Graphic~({\sc b}) draws $|\cPh^{NV,\nu,n}f-P_Tf|$ in function of $1/n$ (in log-log scale): the regressed slopes are 1.89 and 3.98 for the second and fourth order respectively.}}\label{rHeston_orders}
\end{figure}

\red{
\subsection{Further applications to financial models}
This subsection explores briefly straightforward generalizations of the approximation scheme developed in the article, leveraging the existing theoretical framework. We consider the following general dynamics:
\begin{equation}\label{MF_nocorr_log-Heston_SDEs} 
  \begin{cases}
    dX_t &= (r_t-\frac{1}{2}\sum_{m=1}^{M}Y_{m,t}) dt + \sum_{m=1}^{M}\sqrt{Y_{m,t}} (\rho_m dW_{m,t} + \sqrt{1-\rho_m^2} dB_{m,t})+dH_t, \\
    dY_{m,t} &= (a_m-b_mY_{m,t}) dt +\sigma_m \sqrt{Y_{m,t}} dW_{m,t}, \quad m\in\{1,\ldots,M\},
  \end{cases}
\end{equation}
with $X_0=x \in \R$ and $Y_{m,0}=y_m\ge 0$ for $m\in\{1,\ldots,M\}$, and parameters $a_m\ge 0$, $\sigma_m>0$, $b_m\in \R$ and $\rho_m\in[-1,1]$. Here, $(W_1,B_1,\ldots,W_M,B_M)$ a $2M$ independent Brownian motions and the processes $r$, $H$ and $(W_1,B_1,\ldots,W_M,B_M)$ are assumed to be independent. This framework embeds several important models in Mathematical Finance:
\begin{itemize}
  \item When $M=2$, $H\equiv0$ and $r_t=r$ is constant, we recover the double Heston model proposed by Christoffersen et al.~\cite{CHJ}.
  \item When $M=1$,  $r_t=r$ is constant and  $H_t = \sum_{k=1}^{N_t}J_k$ is an independent compound Poisson process written through a Poisson process $N$ with constant intensity $\lambda$ and the i.i.d. jumps $\{J_k\}_{k\in\N}$, we obtain either Bates' model~\cite{Bates} when $J_1$ is normally distributed. When $J_1$ is an asymmetric double exponential distribution, we obtain an extension of Kou's model~\cite{Kou}.
  \item When $M=1$, $H_t = \sum_{k=1}^{N_t}J_k$ is a  compound Poisson process with normal jumps and the interest rate $r_t$ follows the Cox-Ingersoll-Ross model, we recover the model proposed by Bakshi, Cao, and Chen \cite{BCC97}.
\end{itemize}
We assume -- which is the case for all the listed models above -- that the processes $r$ and $H$ can be sampled exactly on any discretization grid. We simulate the processes 
$Y_m$ on the fine and coarse grids exactly as in Subsection~\ref{Subsec_implementation}, using either the Ninomiya-Victoir scheme (if $\sigma_m^2\le 4a_m$) or the exact one. We note $\hat{Y}^0_m$ the scheme on the coarse grid and $\hat{Y}^1_m$ on the fine grid. For the log-stock $X$, we use the following scheme for $1\le k\le n$ if $\ell=0$ and for $k\not= \kappa+1$ if $\ell=1$:
\begin{align*}
  \hX^{\ell}_{kh_1} =&\hX^{\ell}_{(k-1)h_1} +\frac{h_1}{2}(r_{(k-1)h_1}+r_{kh_1}) + H_{kh_1}-H_{(k-1)h_1} \\&+\sum_{m=1}^{M}\Bigg(-\frac{\rho_m a_m}{\sigma_m} h_1 +\frac{\rho_m}{\sigma_m}(\hY^\ell_{m,kh_1}-\hY^\ell_{m,(k-1)h_1}) +(\frac{\rho_m}{\sigma_m}b_m-\frac{1}{2}) \frac{h_1}{2}(\hY^\ell_{m,(k-1)h_1}+\hY^\ell_{m,k h_1}) 
  \Bigg)\\&+ \sqrt{\sum_{m=1}^M(1-\rho_m^2)\frac{h_1}{2}(\hY^{\ell}_{m,(k-1)h_1}+\hY^{\ell}_{m,k h_1})}N_{k}
\end{align*}
and, for $\ell=1$ and $1\le k\le n$,
\begin{align*}
  &\hX^{1}_{\kappa h_1+kh_2} = \hX^{1}_{\kappa h_1+(k-1)h_2} +\frac{h_2}{2}(r_{\kappa h_1+(k-1)h_2}+r_{\kappa h_1+kh_2})+ H_{\kappa h_1+kh_2}-H_{\kappa h_1+(k-1)h_2}  +\sum_{m=1}^{M}\Bigg(\\&-\frac{\rho_m a_m}{\sigma_m} h_2 +\frac{\rho_m}{\sigma_m}(\hY^1_{m,\kappa h_1+kh_2}-\hY^1_{m,\kappa h_1+(k-1)h_2}) +(\frac{\rho_m}{\sigma_m}b_m-\frac{1}{2}) \frac{h_2}{2}(\hY^1_{m,\kappa h_1+(k-1)h_2}+\hY^1_{m,\kappa h_1+kh_2}) 
  \Bigg)\\&+ \sqrt{\sum_{m=1}^M(1-\rho_m^2)\frac{h_2}{2}(\hY^{1}_{m,\kappa h_1+(k-1)h_2}+\hY^{1}_{m,\kappa h_1+k h_2})}\tilde{N}_{k}.
\end{align*}
Notice that we use a single Gaussian increment for the $X$ component, taking advantage of the independence between $W_1,\dots,W_M$. We have used the same notation as in Subsection~\ref{Subsec_implementation}, and in particular we can use the same coupling as in~\ref{Subsec_coupling} between $N_{\kappa+1}$ and $(\tilde{N}_k)_{1\le k \le n}$. Simulating on the whole grid enables us to price pathwise options such as Asian options, as explained in Subsection~\ref{Subsec_Pricing}. If only European are considered, we can use instead the one-step method presented in Subsection~\ref{Subsec_coupling}: it requires to use a single normal variable for the log-stock component and gives a lower variance for the correcting term. 
}

\bibliographystyle{abbrv}
\bibliography{Biblio_Heston}

\end{document}